\newcommand\Sec[1] {Sec.~\ref{#1}}
\newcommand\Fig[1] {Fig.~\ref{#1}}
\newcommand\Table[1] {Tbl.~\ref{#1}}
\newcommand{\qm}[1]{``#1''}
\newcommand{\call}{\mathcal{L}}
\newcommand{\calx}{\ensuremath{\mathcal{X}}}
\newcommand{\caly}{\ensuremath{\mathcal{Y}}}
\newcommand{\vbayes}{V_{1}}
\newcommand{\bayesleak}{\call_{1}}
\newcommand\Joint[2]		{#1{\kern-0.6pt\smalltriangleright\kern.3pt}#2} 
\DeclareFontFamily{U}{mathb}{\hyphenchar\font45}
\DeclareFontShape{U}{mathb}{m}{n}{
	<5> <6> <7> <8> <9> <10> gen * mathb
	<10.95> mathb10 <12> <14.4> <17.28> <20.74> <24.88> mathb12
}{}
\DeclareSymbolFont{mathb}{U}{mathb}{m}{n}
\DeclareMathSymbol{\smalltriangleright}{2}{mathb}{"9B}
\newcolumntype{L}[1]{>{\raggedright\let\newline\\\arraybackslash\hspace{0pt}}m{#1}}
\newcolumntype{C}[1]{>{\centering\let\newline\\\arraybackslash\hspace{0pt}}m{#1}}
\newcolumntype{R}[1]{>{\raggedleft\let\newline\\\arraybackslash\hspace{0pt}}m{#1}}
\newcommand{\dataset}[1] {\textsc{#1}}
\newcommand{\tabimporttrans}{\textsc{De-identified transactions}\xspace}
\newcommand{\tabsumbycity}{\textsc{Summary by city}\xspace}
\newcommand{\tabimporters}{\textsc{Importers}\xspace}
\newcommand{\tabsumbyncm}{\textsc{Summary by state}\xspace}
\newcommand{\rfb}{Tax and Customs Administration\xspace}
\begin{document}

\title{\textsc{A novel reconstruction attack on foreign-trade official statistics, with a Brazilian case study}}
\author{Danilo Fabrino Favato\footnote{dfavato@dcc.ufmg.br --- remaining affiliations in the end of the manuscript.} \and Gabriel Coutinho \and M\'{a}rio S.\ Alvim \and Natasha Fernandes}
\date{\today}
\maketitle
\vspace{-0.8cm}

\begin{abstract}{
    In this paper we describe, formalize, implement, and experimentally evaluate a novel transaction re-identification attack against official foreign-trade statistics releases in Brazil. 
    The attack's goal is to re-identify the importers of foreign-trade transactions (by revealing the identity of the company performing that transaction), which consequently violates those importers’ fiscal secrecy (by revealing sensitive information: the value and volume of traded goods). 
    We provide a mathematical formalization of this fiscal secrecy problem using principles from  the framework of quantitative information flow (QIF), then carefully identify the main sources of imprecision in the official data releases used as auxiliary information in the attack, and model transaction re-construction as a linear optimization problem solvable through integer linear programming (ILP). We show that this problem is NP-complete, and provide a methodology to identify tractable instances. We exemplify the feasibility of our attack by performing 2,003 transaction re-identifications that in total amount to more than \$137M, and affect 348 Brazilian companies.
    Further, since similar statistics are produced by other statistical agencies, our attack is of broader concern.}
\end{abstract}

\begin{center}
	\textbf{Keywords} \\ 
	\textsc{Quantitative information flow, Integer optimization programming, Database reconstruction attack, Foreign trade statistics}
\end{center}
  
  \vspace{1cm}

\maketitle

\section{Introduction}\label{sec:intro}

\textit{Foreign-trade statistics} comprise national import and export summaries, and are instrumental both for 
national governments and for official supranational 
and international organizations.
These statistics are used, for example, to produce a country's National Accounts, which is the official bookkeeping of its economic activity, 
as well as to provide accurate measures of that nation's 
Gross Domestic Product (GDP) \cite{sna2009}.

The \textit{utility} of accurate foreign-trade statistics, however,
surpasses governments, as they can provide valuable information
to society as a whole. 
The release to the public of accurate trade statistics enables many useful analyses including, for example,
 the identification of harmful economic practices such as forgery and predatory prices (dumping)~\cite{a_vi_gatt_1994},
    market and economic studies for new businesses planning, and
    assessment of the impact of governmental policies.

However, utility to society is not the government's only concern when releasing foreign-trade statistics.
In order to foster a fair, free-market environment,
several countries grant companies legal rights to some level of
\textit{fiscal secrecy} about their foreign-trade transactions.

In Brazil, federal legislation requires public disclosure of foreign-trade statistics,\footnote{Item X of Article 92, Decree 9745 of April \nth{8}, 2019 and Article 1, Item XII of the Brazilian Tax and Customs Administration internal regiment (Annex I Ordinance 284 of July \nth{27}, 2020).} but leaves the specific granularity of what is disclosed to each governmental agency's discretion.
On the other hand, Brazilian fiscal secrecy legislation recognizes as 
sensitive information ---and therefore worthy of protection---
the values and volumes of goods traded by individual companies.
It also forbids governmental entities from publishing this kind 
of information without the usage of disclosure control techniques protecting the identity of the importer in each 
transaction.\footnote{Article 198, National Tax Law 5.172, October \nth{25}, 1966.}
In order to balance the utility (to the general public) 
of foreign-trade statistics and the 
legally required fiscal secrecy (of companies),
 Brazilian governmental agencies have (since 2007) been 
publishing
datasets of foreign-trade transactions anonymized using techniques such as de-identification, generalization and suppression\footnote{Ordinance SRF 306, March \nth{22} of 2007.} (detailed in Section~\ref{sec:overview-datasets}  ahead).

In this paper we describe, formalize, implement, and experimentally evaluate a novel transaction re-identification attack against official 
foreign-trade statistics releases in Brazil. 
The attack's goal is to re-identify the importers of foreign-trade transactions (by revealing the the identity of the company performing that transaction), which consequently violates those importers’
fiscal secrecy (by revealing sensitive information: the
value and volume of traded goods). 
After properly formalizing and implementing our attack, we demonstrate its effectiveness against Brazilian releases of official foreign-trade statistics by performing 2,003 transaction re-identifications that in total amount to over  \$100M in value, and affects 348 companies.

It is noteworthy that, although in the past the 
methodology for collection and organization of foreign-trade statistics 
varied significantly from country to country, nowadays a
standardization provided by the United Nations (UN)~\cite{un_imts_2010} is adopted by many countries, including Brazil.
This means that Brazilian foreign-trade statistics analyzed in this paper are highly compatible with those of other countries, and therefore our
novel attack might also be of concern to those countries.

\subsection{Contributions and plan of the paper}

The main contributions of this paper are the following.

\begin{enumerate}
    \item We identify  and describe a novel transaction re-identification attack that violates the fiscal secrecy of companies by reconstructing foreign-trade transaction data
    from publicly available datasets in the context of Brazil.
    Since production of Brazilian foreign-trade statistics follows the UN's standard methodology, our attack may be of concern to other countries as well.
    (\Sec{sec:overview}).
    
    \item We provide a mathematical formalization of the fiscal secrecy problem of transaction re-identification using principles from  the framework of \emph{quantitative information flow} (QIF)~\cite{ScienceOfQIF},
    which has sound information- and decision-theoretic grounds to naturally model an adversary's knowledge, goals, and capabilities, as well as the leakage of information caused by an attack. (\Sec{sec:model}).
    
    \item We carefully identify the main sources of imprecision in the 
    official foreign-trade statistics used as side-information in our attack, including:
    \begin{enumerate}
        \item
        the splitting of transaction information across various datasets;
        \item
        the sanitization techniques employed by the Brazilian government in the publishing of these statistics, such as de-identification, generalization, and suppression; and
        \item
        rounding errors introduced (voluntarily or not) in the data publishing process.
    \end{enumerate}
    To cope with such sources of uncertainty, we
    model transaction re-construction as a linear optimization problem solvable through integer linear programming (ILP), and show that it is NP-complete. We then provide a methodology to identify tractable instances of the problem, as well as to simplify some intractable instances by finding further constraints in order to try to make them tractable (\Sec{sec:implementation}).
    
    \item We exemplify the feasibility of our attack and its algorithmic solution through a concrete case study in which over 2,000 transactions are re-identified and the fiscal secrecy of hundreds of companies are violated in Brazil, using as computational resources only a modest personal computer (\Sec{sec:experiments}).
\end{enumerate}

Finally, in \Sec{sec:related-work} we discuss related work, and in \Sec{sec:conclusion} we present our conclusions and future prospects.

\subsection{Ethical Considerations}
\label{sec:ethical_cons}

The results of this study have already been presented to the relevant government agencies ---the Brazilian Customs and Tax administration---
in order to give them time to act. On December \nth{16} 2021 they removed public access to 
 one of their main databases, which is sufficient to prevent the type of attack presented here. For ethical reasons we have also chosen not to disclose in this paper other points of access to the data used in the description of our attack. Throughout this text we omit the underlying business direct identifiers and, in particular, we do not provide identifiers for businesses whose privacy is affected by this attack.
 
On May \nth{4} we contacted the UN Statistical Commission and the Tax and Customs administration of a particular country ---whose name we omit here for precaution\footnote{This country was directly contacted as it is, to our knowledge, the one whose disclosure methodology most closely resembles the Brazilian, and thus for which our attack could be easily adapted.}. Neither had replied to our e-mail by the time we finished this document.

\section{The problem, the datasets, and the attack}\label{sec:overview}

In this section we provide an overview of the problem tackled in this paper and motivate why it is a serious fiscal secrecy matter. 
We start with a description of the Brazilian environment for production, 
processing, and publication of foreign-trade information.
We then describe the publicly available data used in our attack, 
and provide a concrete attack instance used to compromise fiscal secrecy. 
Since Brazilian foreign-trade statistics methodology is highly compatible
with that of other countries, the attack we identify may be replicable in
those countries as well.
A proper formalization of the attack ---including its general form--- and
full algorithmic solutions are deferred to Sections~\ref{sec:model} and~\ref{sec:implementation}, respectively.

\subsection{Fiscal secrecy vs.\ societal utility in foreign-trade statistics}\label{overview-trade}

The current legal framework for the collection, treatment, usage,
and publication of individuals' and companies' data by the government
has two conflicting goals. 
On one hand, there is an increasing desire for 
\textit{Open Government Data}, which ``promotes transparency, accountability and value creation by making government data available to all''.\footnote{\url{https://web.archive.org/web/20201120073847/https://www.oecd.org/gov/digital-government/open-government-data.htm}} 
This transparency is extremely useful for civil society to, e. g., follow and evaluate the effectiveness of public policies, or detect
fraud in the use of public money, or even malpractices by ill-intended business.
In this context, it is desirable that the published data have high \textit{utility}, in the sense of being as accurate, detailed, complete, and freely available as possible. 

On the other hand, there are various concerns about the unauthorized disclosure of sensitive information about individuals or companies, a worry that has sprouted many data protection laws around the globe.\footnote{Notable examples or privacy legislation include the General Data Protection Regulation in the European Union (GDPR 2016/679), the \textit{Lei Geral de Proteção de Dados Pessoais} in Brazil (LGPD 13.709/2018), the Personal Information Protection and Electronic Documents Act in Canada (PIPEDA April \nth{13} 2000), and The Data Privacy Act in the Philippines (2012).}
Although these novel protection laws, as well as media and public attention, tend to focus on protecting data about individuals (\textit{privacy}), in this paper we focus on the unwanted disclosure of information about commercial transactions between companies (\textit{fiscal secrecy}). 
A recent example of the risks associated with the latter is the case in which the American National Security Agency (NSA) was accused of worldwide industrial espionage and the shock produced by these allegations.\footnote{\url{https://web.archive.org/web/20210417032720if\_/https://www.reuters.com/article/us-security-snowden-germany-idUSBREA0P0DE20140126}} 

For a more concrete example of the importance of fiscal secrecy,
consider a small company $A$ that, through its own research and merit, 
has established an agreement with a new foreign supplier which has 
given $A$ a significant edge on the market.
Now say that $B$, a very large (and ill-intentioned) competitor of $A$, 
is interested in finding out $A$'s new supplier to
approach it and either offer it an exclusivity contract (in which the supplier could only sell to $B$), or just start buying all the supplier's capacity in order to starve $A$. 
This could lead to $A$ being pushed out of the market and to significant
losses for its owners and employees.

In general, violations of fiscal secrecy can be harmful to the idea of free markets.
Indeed, \textit{symmetric information} is recognized as one the key requirements for a free competitive market~\cite{LofgrenKarl-Gustaf2002MwAI}.
Of course, in real world markets there is no such a thing as
perfectly symmetric information. 
Each player will know something different about their market, and that is considered acceptable when such information was acquired through legal methods such as research, qualified personnel, etc. Nonetheless, when the information is acquired illegally it is usually tagged as ``privileged'' or \qm{espionage}, and seen as detrimental in fair free-market competition.

In this context, fiscal secrecy violations ---whereby some companies may 
illegally obtain secret information about other companies--- are of concrete concern.
It is thus a challenge for governments to balance the utility (to the general public) of foreign-trade statistics and the fiscal secrecy (of companies).

\subsection{The data release methodology}\label{sec:overview-datasets}

\Fig{fig:reidentification} provides an overview of 
the process of production, compilation, sanitization, and publication of  foreign-trade statistics in Brazil.

\begin{figure}[h]
	\begin{center}
	\includegraphics[width=.5\textwidth]{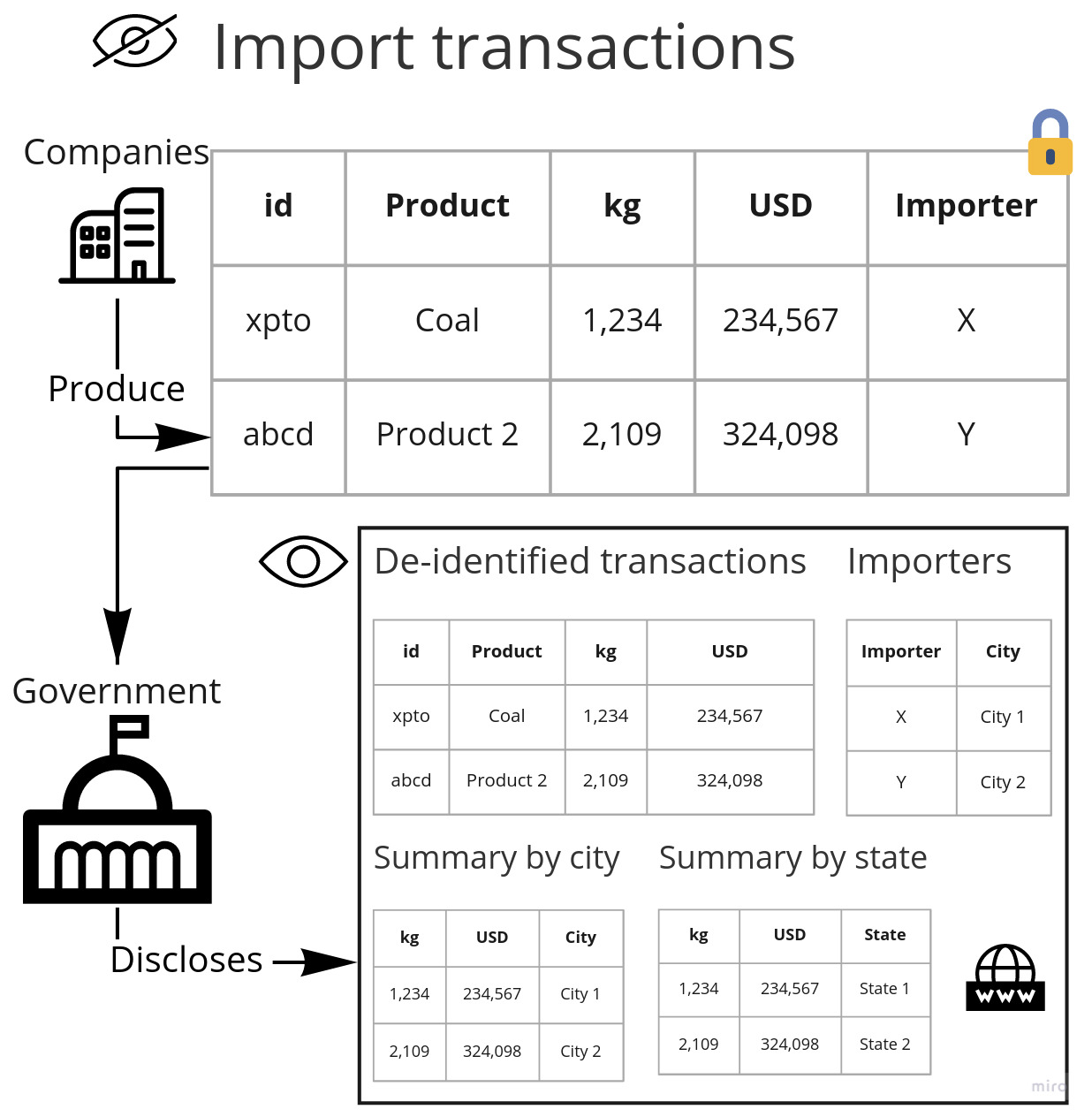}
	\end{center}
	\caption{Overview of the generation of foreign-trade statistics.}\label{fig:reidentification}
\end{figure}

The process begins with each company in the country
producing and submitting
to Siscomex (the Brazilian International Commerce System)\footnote{\url{https://www.gov.br/receitafederal/pt-br/assuntos/aduana-e-comercio-exterior/manuais/despacho-de-importacao/topicos-1/conceitos-e-definicoes/despacho-de-importacao}}
a legally required \emph{Declaration of Imports} (DI) 
whenever an import transaction occurs.

The DI contains detailed information about each transaction
performed by the company, including the goods imported, its quantity, from whom, and at what price and date. It is a document analogous to an invoice in the foreign-trade context, 
and it clearly states the seller and the importer of each transaction.

Upon receiving all DIs from all over the country, 
the government compiles them into a single collection of 
foreign-trade transactions.
This raw-data collection is only used internally by the 
government, but a sanitized form of it is published after the
application of the following sanitization techniques:
\emph{de-identification}, by which obvious individual identifiers (such as an importer's name) are removed; 
\emph{generalization}, by which data is provided in coarser granularity (e. g. when transaction values are grouped by city); and
\emph{suppression}, in which part of the data is removed from the collection. 
The goal of these disclosure methods is to break the connection between the goods imported and the companies behind such transactions.

The end result of this process are the 4 datasets described below, 
which have been publicly available in the period we conducted this study, 
and were updated monthly.
For brevity, only relevant fields from each dataset are included here; full dataset details are provided in Appendix~\ref{sec:sargasso-dataset-structure}. In addition, as explained in \Sec{sec:ethical_cons}, we have chosen not to disclose public access urls for these datasets.

\begin{enumerate}
\item The \dataset{De-identified transactions} dataset (Tbl.~\ref{tab:rfb-fields}) contains detailed records
from the original compilation of all companies' 
DIs, with all importer's direct identifiers omitted.\footnote{During our study,  and until December \nth{16} 2021,
	the \rfb kept an up-to-date version of the dataset on its website.}
	
	\begin{table}[!htbp]
	    \centering
	    \begin{subtable}[b]{\linewidth}
	        \centering
	        \begin{tabular}{L{0.28\linewidth}L{0.62\linewidth}}
		    \textbf{Field} & \textbf{Description} 
		    \\\hline
		\texttt{NUMERO DE ORDEM} & Transaction identification number.
		\\ 
		\texttt{ANOMES} & Year and month of transaction registry. 
		\\ 
		\texttt{COD.NCM} & DI header: NCM of imported goods.
		\\ 
		\texttt{PAIS DE ORIGEM} & DI header: country of origin.
		\\ 
		\texttt{PESO LIQUIDO} & DI header: net weight in kg.
		\\ 
		\texttt{VMLE DOLAR} & DI header: FOB value in USD.
		\\
		\texttt{NAT. INFORMACAO} & DI header: nature of the operation. 
		\\	
	    \end{tabular}
	    \caption{Relevant fields from the \dataset{De-identified transactions} dataset.}
	        \label{tab:rfb-fields} 
        \end{subtable}
	    
	   	\begin{subtable}[b]{\linewidth}
	        \centering
	        \begin{tabular}{L{0.28\linewidth}L{0.62\linewidth}
	        }
		\textbf{Field} & \textbf{Description} 
		\\\hline
		\texttt{CNPJ}		& Company's registration number.		
		\\
		\texttt{EMPRESA}		& Company's name.
		\\
		\texttt{MUNICÍPIO}	& Company's city.
		\\
		\texttt{UF}			& Company's federation state.
		\\			
	\end{tabular}
	\caption{Relevant fields from the \dataset{Importers} dataset.}
	\label{tab:importers-fields}
    \end{subtable}	
	    
	 \begin{subtable}[b]{\linewidth}
	        \centering
	        \begin{tabular}{L{0.28\linewidth}L{0.62\linewidth}}
			\textbf{Field} & \textbf{Description}
			\\\hline
			\texttt{CO\_ANO}			& Year of registry.
			\\
			\texttt{CO\_MES}			& Month of registry.
			\\
			\texttt{SH4}				& Imported goods' SH4 
			code.		
			\\
			\texttt{CO\_PAIS}		& Imported goods' country
			of origin.
			\\
			\texttt{SG\_UF\_MUN}		& Importer's federation
			state.
			\\
			\texttt{CO\_MUN}			& Importer's city.
			\\
			\texttt{KG\_LIQUIDO}		& Imported goods' total
			net weight in kg.
			\\
			\texttt{VL\_FOB}			& Imported goods' total
			value in USD.
			\\			
		\end{tabular}
		\caption{Relevant fields from the \dataset{Summary by city} dataset.}
		\label{tab:hs4-city-fields} 
		\end{subtable}
		
		\begin{subtable}[b]{\linewidth}
	        \centering
	        \begin{tabular}{L{0.28\linewidth}L{0.62\linewidth}}
		\textbf{Field} & \textbf{Description}
		\\\hline
		\texttt{CO\_ANO}			& Year of registry.
		\\
		\texttt{CO\_MES}			& Month of registry.
		\\
		\texttt{CO\_NCM}			& Imported goods NCM
		code number.
		\\
		\texttt{CO\_PAIS}		& Imported goods' country
		of origin.
		\\
		\texttt{SG\_UF\_NCM}	& Importer' federation state.
		\\
		\texttt{KG\_LIQUIDO}		& Imported goods' total net weight in kg.
		\\
		\texttt{VL\_FOB}			& Imported goods' total value in USD.
		\\
	\end{tabular}
	\caption{Relevant fields from the \tabsumbyncm dataset.}
	\label{tab:ncm-uf-fields} 
\end{subtable}
		
		\caption{Relevant fields of published datasets, whose full descriptions can be found in Appendix~\ref{sec:sargasso-dataset-structure}.}
		\label{tab:published-datasets}
			
	\end{table}
	
	\item The \dataset{Importers} dataset (Tbl.~\ref{tab:importers-fields}) contains data about businesses that have conducted at least one import operation in the current year. New businesses were added to the list each month, so
	anyone with access to the dataset could pinpoint the exact month in which a given importer started importing by tracking the additions made to the list.

\item The \dataset{Summary by city} dataset (Tbl.~\ref{tab:hs4-city-fields}) 
contains data of transactions' value and weight aggregated by city.
	
	\item The \tabsumbyncm dataset,
	(Tbl.~\ref{tab:ncm-uf-fields})
	contains aggregated data for value, weight, freight, insurance and quantity of transactions, grouped by states.
	This dataset contains the official figures of foreign commerce 
	for the National Accounting System, which are used to calculate the Brazilian GDP~\cite{NotaTecnicaSecex}. 
	
\end{enumerate}    

\subsection{Overview of the attack, and the correlations it exploits}

The crucial challenge in our transaction re-identification attack is how to reconstruct a row of the original raw-data dataset from the publicly released, sanitized datasets described in the previous section.
Clearly, from the reconstructed transaction row the adversary immediately 
recovers the transaction's importer and its sensitive values.
The attack exploits the following correlations among the datasets published by the government.
    
    \begin{itemize}
        \item  \textbf{NCM fields.}
    The fields \texttt{COD.NCM} in \tabimporttrans, \texttt{SH4} in \tabsumbycity and \texttt{CO\_NCM} in \tabsumbyncm are all related to the \emph{Common Mercosur Nomenclature} (NCM), which is a regional product categorization system used in the Mercosur Economic Region since 1995.\footnote{\url{https://receita.economia.gov.br/orientacao/aduaneira/classificacao-fiscal-de-mercadorias/ncm}} 
    It was derived from (and is highly compatible with) the \emph{Harmonized Commodity Description and Coding System} (HS), which was designed and is maintained by the World Customs Organization.\footnote{\url{http://www.wcoomd.org/en/topics/nomenclature/overview/what-is-the-harmonized-system.aspx}} NCM codes are 8-digits long and follow a hierarchical organization. Eg., the NCM code 08051000 identifies \textit{Fresh oranges}. Its first four digits, 0805, represent the hierarchical level labeled SH4 
    \textit{Citrus fruit, fresh or dried}. Whenever a NCM code appears in \tabimporttrans, it also appears in \tabsumbyncm, and its SH4 appears in \tabsumbycity.
    
    \item   \textbf{Final destination fields.}
    The fields \texttt{MUNICÍPIO} in \tabimporters, \texttt{CO\_MUN} in \tabsumbycity and \texttt{CO\_URF} in \tabsumbyncm are all related to the importer's official address. The first two fields refer to the importer's city, and the last refers to the importer's state. Whenever a city appears in \tabsumbycity, it also appears in \tabimporters, and its state appears in \tabsumbyncm.
    
    \item \textbf{Country of origin fields.}
    The fields \texttt{PAIS DE ORIGEM} in \tabimporttrans and \texttt{CO\_PAIS} in both \tabsumbycity and \tabsumbyncm are all related to the country of origin of the imported goods. If a country appears in \tabimporttrans, it also appears in \tabsumbycity and in \tabsumbyncm.
    \end{itemize}

In the absence of any anonymization technique, the
solution to the problem of reconstructing transaction rows from the
public datasets could be obtained via a system 
of linear equations, whose variables would be the quantities 
(values and weights) in each transaction, whose constraints would be the summary statistics, and whose solution space would be defined by the whole set of transactions. 
Even in this simpler situation, however, it would still be possible that
multiple solutions existed, ie., multiple mappings of transactions to importers that satisfied all constraints. 
But since anonymization techniques such as suppression and generalization are employed on the published data, further uncertainty
is added to the problem, increasing the number of possible solutions.

For the reasons above, we model transaction reconstruction as a 
linear optimization problem solvable through integer linear programming.
In Sec.~\ref{sec:model} we formalize the attack,
and in Sec.~\ref{sec:implementation} we detail its algorithmic
solution, including 
how we cope with further uncertainties introduced by imperfect data.
Before that, however, we present a concrete motivating example of
our attack.

\subsection{A motivating example}\label{sec:overview-example}

Here we provide a concrete instance of
our transaction re-identification attack.
which is formalized in Sec.~\ref{sec:model}.

In this attack, depicted in Fig.~\ref{fig:attack-flow},
an adversary with very limited prior knowledge applies 
integer programming techniques (the \texttt{ATTACK ALGORITHM} detailed in Sec.~\ref{sec:implementation}) to publicly available 
data.
Her goal is to reconstruct the link between a transaction chosen
as target in 
the \dataset{De-identified transactions} dataset and a company in the \dataset{Importers} dataset, effectively re-identifying the target transaction's importer and inferring its sensitive attributes (value and volume).

In our concrete example we describe how the adversary
is able to re-identify the importer of a real transaction 
valued at $\sim${\$3400} USD, performed in January 2021. For ethical reasons, we omit the importer's identity.

\begin{figure*}[ht]
    \centering
    \includegraphics[width=\textwidth]{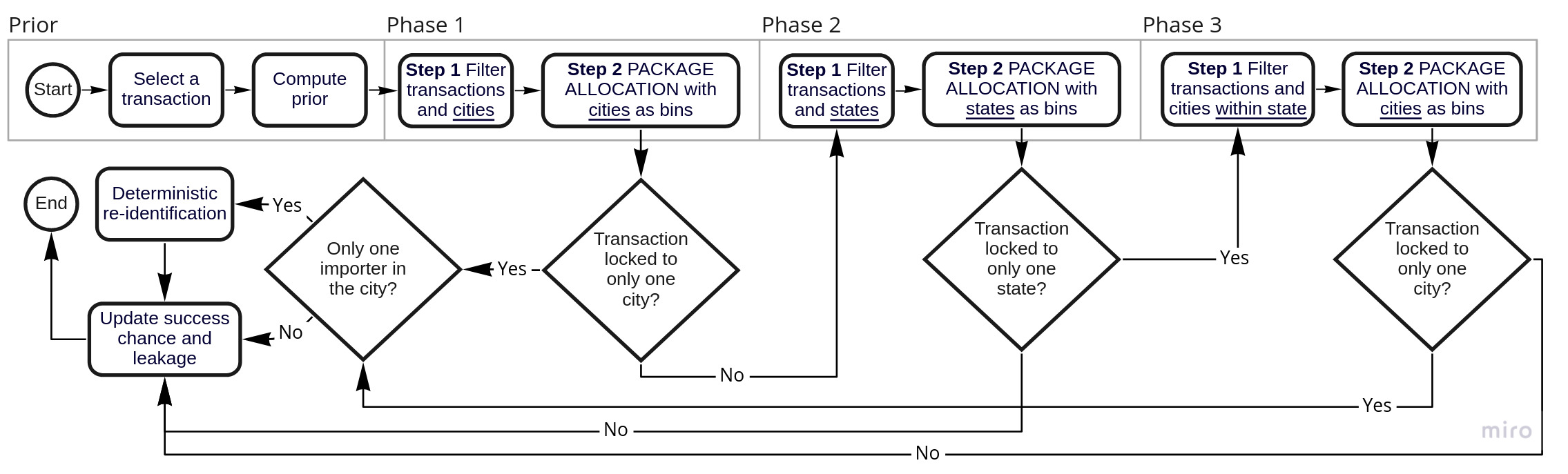}
    \caption{Phases and steps of the attack algorithm.}
    \label{fig:attack-flow}
\end{figure*}

\textbf{Before the attack.}
The adversary starts with access to only the publicly available \dataset{De-identified transactions} and \dataset{Importers} datasets (Tbls.~\ref{tab:rfb-fields} and~\ref{tab:importers-fields}, respectively).
She picks as her target for re-identification the 
transaction  
specified in Tbl.~\ref{tab:example-specific-target}, which 
we refer to by its identifying \texttt{NR ORDEM}: 10653400001. 

\begin{table}[!htbp]
		\centering
			\begin{tabular}{ll}
				\textbf{NR ORDEM} & 10653400001 \\ 
				\textbf{Description} &
                Cotton T-400 [...] Metros:1.013,70(09 rolos) \\
				\textbf{Quantity} & 420.69 kg \\ 
				\textbf{Value (USD)} & 3,388.41 \\
				\textbf{NCM} & 52083900 \\
				\textbf{Origin} & China \\
			\end{tabular}
		\caption{Target transaction for re-identification.}
		\label{tab:example-specific-target}
	\end{table}

Although real-world adversaries typically have some common knowledge
about foreign trade (eg., they know that a clothing manufacturer is more likely to buy 400 kg of cotton cloth than a toy store is),
here we are conservative and assume a very modest 
adversary who lacks any such common knowledge.
She just assumes, 
a priori, that every company is equally likely to be the 
importer of any given transaction.
This means that our adversary's best strategy to re-identify 
the target transaction before the attack is performed 
is just to blindly guess which company among the ${18,430}$ 
present in the \dataset{Importers} dataset is the actual importer. Therefore, the adversary's probability of a successful 
transaction re-identification at this point is $\nicefrac{1}{18,430} \approx 0.005\%$.

\textbf{Attack execution.}
During the attack the adversary gains access to the \dataset{Summary by city} and \dataset{Summary by state}
datasets, and uses them as auxiliary information to re-identify the target transaction's importer. 
She breaks her task into two parts as follows.
\begin{itemize}
    \item First, she tries to infer the city of the target transaction.
    This is done by using \tabsumbycity in Phase 1 
    to try to link the target transaction directly to a city.
    If that direct step is computationally intractable (as
    we shall see in \Sec{sec:implementation}, the \texttt{PACKAGE ALLOCATION} algorithm used to link datasets is  exponential), the adversary tries to find further constraints to simplify the problem.
    For that, she uses \tabsumbycity in conjunction with \tabsumbyncm in Phase 2 to try to first link the target transaction to a state, and only then she moves to Phase 3 and tries to link the target transaction to a city, given that the state is already known.
    \item Second, if the adversary has successfully linked the target transaction to only one city, she guesses as the importer a company based on that city, found by querying the \tabimporters dataset.
\end{itemize}

We now describe the attack phases in more detail.

\textbf{Attack Phase 1: Trying to link the target transaction directly to a city.}
To try to directly link the target transaction (from \Table{tab:example-specific-target}) to its city
of occurrence, the adversary will perform two steps:
\begin{enumerate}
    \item Filter \tabsumbycity and \tabimporttrans by SH4 = 5208 and country of origin = China (same parameters as target transaction). This yields 19 cities and 99 transactions.
    \item Try to run the \texttt{PACKAGE ALLOCATION} algorithm to link the target transaction to one of the cities.
\end{enumerate}
Step 1 already reduces the number of possible importers from over 18,000 to fewer than 5,000 (just the ones linked to those 19 cities). However, for our computationally bounded adversary, Step 2 is intractable. Therefore she proceeds to Phase 2.

\textbf{Attack Phase 2: Reducing the instance's size by first trying to link the target transaction to a state.}
Here the adversary tries to link the target transaction to its state as an intermediate step for finding the transaction's city. This is also done in two steps:
\begin{enumerate}
    \item Filter \tabsumbyncm and \tabimporttrans by NCM = 52083900 and country of origin = China (same parameters as target transaction). The results of these queries are show in Tbls.~\ref{tab:example-possible-states} and \ref{tab:example-all-transactions} respectively.
    \item Try to run the \texttt{PACKAGE ALLOCATION} algorithm to link the target transaction to one of the states.
\end{enumerate}

The key idea behind this linkage algorithm is that the adversary knows that there must exist at least one valid partition of these transactions among states that produces the same sums observed in \Table{tab:example-possible-states}. 
(Notice that the totals on both tables are the same, ignoring the decimal places; we shall discuss rounding errors in \Sec{sec:implementation-rounding}).

\begin{table}[!htbp]
\centering
        
	\begin{subtable}[b]{\linewidth}
		\centering
		\begin{tabular}{rrr}
				\textbf{NR ORDEM} 	& \textbf{Value USD} 		& \textbf{kg} \\ \hline
				\underline{10653400001}		&3,388.41			&420.69		\\
				08986300003                 & 111.91            & 23.1 \\
                02113100001                 & 1,116.52          & 242.86 \\
                05566500003                 & 19,856.21         & 4,055 \\
                12343400002                 & 20,216.8          & 2,091.3 \\
                12319800004                 & 1,346.64          & 136 \\
                12634700012                 & 7,698.19          & 918.25 \\
                02898100002                 & 57,194.45         & 14,986 \\
                13460400002                 & 13,434.69         & 2,420 \\
                11554100002                 & 2,400.31          & 182 \\
                11722800003                 & 9,064.28          & 1,644 \\
                11717600002                 & 22,297.1          & 1,850 \\
                11722600002                 & 7,024.53          & 1,265 \\
                10652800002                 & 3,618.88          & 425.34 \\
				\textbf{Total}		    &\textbf{168,768.92}	&\textbf{30,659.54} \\
			\end{tabular}
			\caption{Transactions in \tabimporttrans which are in the same NCM and country of origin as selected target-transaction.}
		    \label{tab:example-all-transactions}
	\end{subtable}
	
	\begin{subtable}[b]{\linewidth}
	    \centering
			\begin{tabular}{lrr}
    			\textbf{State}  & \textbf{Value (USD)} 		& \textbf{Weight (kg)} \\ \hline
    			CEARA           &19,856     &4,055		\\
				ESPIRITO SANTO 	&112,762    &22,483		\\
				MINAS GERAIS 	&27,224     &2,937		\\
				SANTA CATARINA  &8,815      &1,161       \\
				SAO PAULO 		&112        &23		\\
				\textbf{Total}  &\textbf{168,769}		&\textbf{30,659} \\
			\end{tabular}
		\caption{States in the \tabsumbyncm dataset which are compatible with the  selected target-transaction.}
		\label{tab:example-possible-states}
	\end{subtable}
	
	\caption{Intermediate results used in the motivating example.}
	\label{fig:example-tables}
	
\end{table}

	If there is just one possible partition, or if in all possible partitions the target package would only fit in one destination, then 
	the adversary can be sure that she has correctly identified 
	the target transaction's destination state, and can proceed to Phase~3.
	Otherwise, the adversary will have more than one state and many companies left as possible importers, and she just makes a blind guess among those to infer the correct one.
	
	In the latter case, the adversary will not be absolutely certain of a 
	correct re-identification, but can still have some probabilistic confidence
	in the result.

	\textbf{The algorithm.}
	The algorithm used to find these valid partitions
	that link a transaction to its city/state employs integer programming techniques, 
	and is fully explained in Sec.~\ref{sec:implementation}. 
	Here we provide an overview of it in the context of our motivating example.
	
	For this algorithm, the records from 
	\tabimporttrans (\Table{tab:example-all-transactions})
	are represented as ``packages'', as shown in Fig.~\ref{fig:packages}. 
	Each package is represented by a rectangle whose
	width and height are the transaction's value and weight, respectively. 
	The totals by state from \tabsumbyncm
	(\Table{tab:example-possible-states}) are represented 
	as target coordinates in the plane, as shown in Fig.~\ref{fig:bins}. 
	The $x$ coordinate of these targets is positioned at the total value per state, and the $y$ coordinate is at the total weight per state.
	
	\begin{figure*}[!htbp]
	\centering
	\begin{subfigure}[b]{0.24\linewidth}
		\centering
		\begin{scriptsize}
			\begin{tikzpicture}[scale=0.7]
				\filldraw[fill=Green] (0,0) rectangle (0.338841, 0.42069)
				node[below, at start] {10653400001};
				\draw [|-|] (0, 0.7) -- (0.338841, 0.7)
				node[sloped,above,midway] {3,388.41 USD};
				\draw [|-|] (0.6, 0.42069) -- (0.6, 0)
				node[right, midway] {420.69 kg};
			\end{tikzpicture}
			
			\vspace{2mm}
			
			\begin{tikzpicture}[scale=0.7]
				\filldraw[fill=gray!20] rectangle (2.02168, 2.0913)
				node[midway, align=center] {123434\\00002};
				\draw [|-|] (0, 2.4) -- (2.02168, 2.4)
				node[sloped,above,midway] {20,216 USD};
				\draw [|-|] (2.3, 2.0913) -- (2.3, 0)
				node[sloped,above,midway] {2,091.3 kg};
			\end{tikzpicture}
			
			\vspace{2mm}
			
			\begin{tikzpicture}[scale=0.7]
				\filldraw[fill=gray!20] (0,0) rectangle (0.361888, 0.42534)
				node[below, at start] {10652800002};
				\draw [|-|] (0, 0.7) -- (0.361888, 0.7)
				node[sloped,above,midway] {3,618.88 USD};
				\draw [|-|] (0.7, 0.42534) -- (0.7, 0)
				node[right, midway] {425.34 kg};
			\end{tikzpicture}
		
			\vspace{2mm}
			
			\begin{tikzpicture}[scale=0.7]
				\filldraw[fill=gray!20] (0,0) rectangle (1.985621, 4.055)
				node[midway, align=center] {055665\\00003};
				\draw [|-|] (0, 4.3) -- (1.985621, 4.3)
				node[sloped,above,midway] {19,856.21 USD};
				\draw [|-|] (2.3, 4.055) -- (2.3, 0)
				node[sloped,above,midway] {4,055 kg};
			\end{tikzpicture}
		
		\end{scriptsize}
		\caption{Representation of de-identified transactions. The green one is the adversary's target.}
		\label{fig:packages}
	\end{subfigure}
	\hfill
	\begin{subfigure}[b]{0.24\linewidth}
		\centering
		\begin{scriptsize}
			\begin{tikzpicture}[scale=.7]
				\draw[->] (-0.1, 0) -- (3, 0) coordinate (x axis);
				\draw[->] (0, -0.1) -- (0, 3.2) coordinate (y axis);
				\filldraw (2.7224, 2.937) circle (2pt) node(a) {};
				\draw[dashed] (a) -- (a |- x axis);
				\draw[dashed] (a) -- (y axis |- a);
				\node[below] at (a |- x axis) {27,224 USD};
				\node[left] at (y axis |- a) {\rotatebox{90}{2,937 kg}};
				\node[above,align=left] at (a) {MINAS\\GERAIS};
			\end{tikzpicture}
			
			\vspace{2mm}
			
			\begin{tikzpicture}[scale=.7]
				\draw[->] (-0.1, 0) -- (2.1, 0) coordinate (x axis);
				\draw[->] (0, -0.1) -- (0, 4.3) coordinate (y axis);
				\filldraw (1.9856, 4.055) circle (2pt) node(a) {};
				\draw[dashed] (a) -- (a |- x axis);
				\draw[dashed] (a) -- (y axis |- a);
				\node[below] at (a |- x axis) {19,856 USD};
				\node[left] at (y axis |- a) {\rotatebox{90}{4,055 kg}};
				\node[above,align=left] at (a) {CEARA};
			\end{tikzpicture}
			
			\vspace{2mm}
			
			\begin{tikzpicture}[scale=.7]
				\draw[->] (-0.1, 0) -- (1.1, 0) coordinate (x axis);
				\draw[->] (0, -0.1) -- (0, 1.3) coordinate (y axis);
				\filldraw (0.8815, 1.161) circle (2pt) node(a) {};
				\draw[dashed] (a) -- (a |- x axis);
				\draw[dashed] (a) -- (y axis |- a);
				\node[below] at (a |- x axis) {8,815 USD};
				\node[left] at (y axis |- a) {\rotatebox{90}{1,161 kg}};
				\node[above,align=left,right] at (a) {SANTA\\CATARINA};
			\end{tikzpicture}
		\end{scriptsize}
		\caption{Representation of summaries by states.}
		\label{fig:bins}
	\end{subfigure}
	\hfill
	\begin{subfigure}[b]{0.24\linewidth}
		\centering
		\begin{scriptsize}
			\begin{tikzpicture}[scale=.7]
				\filldraw[fill=Green] (0,0) rectangle (0.338841, 0.42069);
				
				\filldraw[fill=gray!20] (0.338841, 0.42069) rectangle (2.360521, 2.51199)
				node[midway,align=center] {123434\\00002};
				
				\filldraw[fill=gray!20] (2.360521, 2.51199) rectangle (2.722409, 2.93733);
				\draw[->] (-0.1, 0) -- (3, 0) coordinate (x axis);
				\draw[->] (0, -0.1) -- (0, 3.2) coordinate (y axis);
				\filldraw (2.7224, 2.937) circle (2pt) node(a) {};
				\draw[dashed] (a) -- (a |- x axis);
				\draw[dashed] (a) -- (y axis |- a);
				\node[below] at (a |- x axis) {27,224 USD};
				\node[left] at (y axis |- a) {\rotatebox{90}{2,937 kg}};
				\node[above,align=left] at (a) {MINAS\\GERAIS};
				
			\end{tikzpicture}
			
			\vspace{2mm}
			
			\begin{tikzpicture}[scale=.7]
				\filldraw[fill=gray!20] (0,0) rectangle (1.985621, 4.055)
				node[midway, align=center] {055665\\00003};
				\draw[->] (-0.1, 0) -- (2.1, 0) coordinate (x axis);
				\draw[->] (0, -0.1) -- (0, 4.3) coordinate (y axis);
				\filldraw (1.9856, 4.055) circle (2pt) node(a) {};
				\draw[dashed] (a) -- (a |- x axis);
				\draw[dashed] (a) -- (y axis |- a);
				\node[below] at (a |- x axis) {19,856 USD};
				\node[left] at (y axis |- a) {\rotatebox{90}{4,055 kg}};
				\node[above,align=left] at (a) {CEARA};
			\end{tikzpicture}
		
			\vspace{2mm}
		
			\begin{tikzpicture}[scale=.7]
				\filldraw[fill=gray!20] (0,0) rectangle (0.769819, 0.91825);
				\filldraw[fill=gray!20] (0.769819, 0.91825) rectangle (0.881471, 1.16111);
				\draw[->] (-0.1, 0) -- (1.1, 0) coordinate (x axis);
				\draw[->] (0, -0.1) -- (0, 1.3) coordinate (y axis);
				\filldraw (0.8815, 1.161) circle (2pt) node(a) {};
				\draw[dashed] (a) -- (a |- x axis);
				\draw[dashed] (a) -- (y axis |- a);
				\node[below] at (a |- x axis) {8,815 USD};
				\node[left] at (y axis |- a) {\rotatebox{90}{1,161 kg}};
				\node[above,align=left,right] at (a) {SANTA\\CATARINA};
			\end{tikzpicture}

		\end{scriptsize}
		\caption{Representation of a valid solution.}
		\label{fig:example-valid-solution}
	\end{subfigure}
	\hfill
	\begin{subfigure}[b]{0.24\linewidth}
		\centering
		\begin{scriptsize}
			\begin{tikzpicture}[scale=.7]
				\filldraw[fill=Green] (0,0) rectangle (0.338841, 0.42069);
				\filldraw[fill=gray!20] (0.338841, 0.42069) rectangle (2.324462, 4.447569)
				node[midway,align=center] {055665\\00003};
				
				\draw[->] (-0.1, 0) -- (3, 0) coordinate (x axis);
				\draw[->] (0, -0.1) -- (0, 3.2) coordinate (y axis);
				\filldraw (2.7224, 2.937) circle (2pt) node(a) {};
				\draw[dashed] (a) -- (a |- x axis);
				\draw[dashed] (a) -- (y axis |- a);
				\node[below] at (a |- x axis) {27,224 USD};
				\node[left] at (y axis |- a) {\rotatebox{90}{2,937 kg}};
				\node[above,right,align=left] at (a) {MINAS\\GERAIS};
			\end{tikzpicture}
			
			\vspace{2mm}
			
			\begin{tikzpicture}[scale=.7]
				\filldraw[fill=Green] (0,0) rectangle (0.338841, 0.42069);
				\filldraw[fill=gray!20] (0.338841, 0.42069) rectangle (2.324462, 4.47569)
				node[midway, align=center] {055665\\00003};
				
				\draw[->] (-0.1, 0) -- (2.1, 0) coordinate (x axis);
				\draw[->] (0, -0.1) -- (0, 4.3) coordinate (y axis);
				\filldraw (1.9856, 4.055) circle (2pt) node(a) {};
				\draw[dashed] (a) -- (a |- x axis);
				\draw[dashed] (a) -- (y axis |- a);
				\node[below] at (a |- x axis) {19,856 USD};
				\node[left] at (y axis |- a) {\rotatebox{90}{4,055 kg}};
				\node[above,align=left] at (2.324462, 4.47569) {CEARA};
			\end{tikzpicture}
		
			\vspace{2mm}
			
			\begin{tikzpicture}[scale=.7]
				\filldraw[fill=gray!20] (0,0) rectangle (0.769819, 0.91825);
				\filldraw[fill=Green] (0.769819, 0.91825) rectangle (1.10866, 1.33894);
				\draw[->] (-0.1, 0) -- (1.1, 0) coordinate (x axis);
				\draw[->] (0, -0.1) -- (0, 1.3) coordinate (y axis);
				\filldraw (0.8815, 1.161) circle (2pt) node(a) {};
				\draw[dashed] (a) -- (a |- x axis);
				\draw[dashed] (a) -- (y axis |- a);
				\node[below] at (a |- x axis) {8,815 USD};
				\node[left] at (y axis |- a) {\rotatebox{90}{1,161 kg}};
				\node[right,align=left] at (1.10866, 1.33894) {SANTA\\CATARINA};
			\end{tikzpicture}
			
		\end{scriptsize}
		\caption{Representation of an invalid solution.}
		\centering
		\label{fig:example-invalid-solution}
	\end{subfigure}
	\caption{Graphical representation of the package-allocation algorithm for Phase 2 of the attack, linking transactions to states.}
	\label{fig:algorithm-packages}
	\end{figure*}
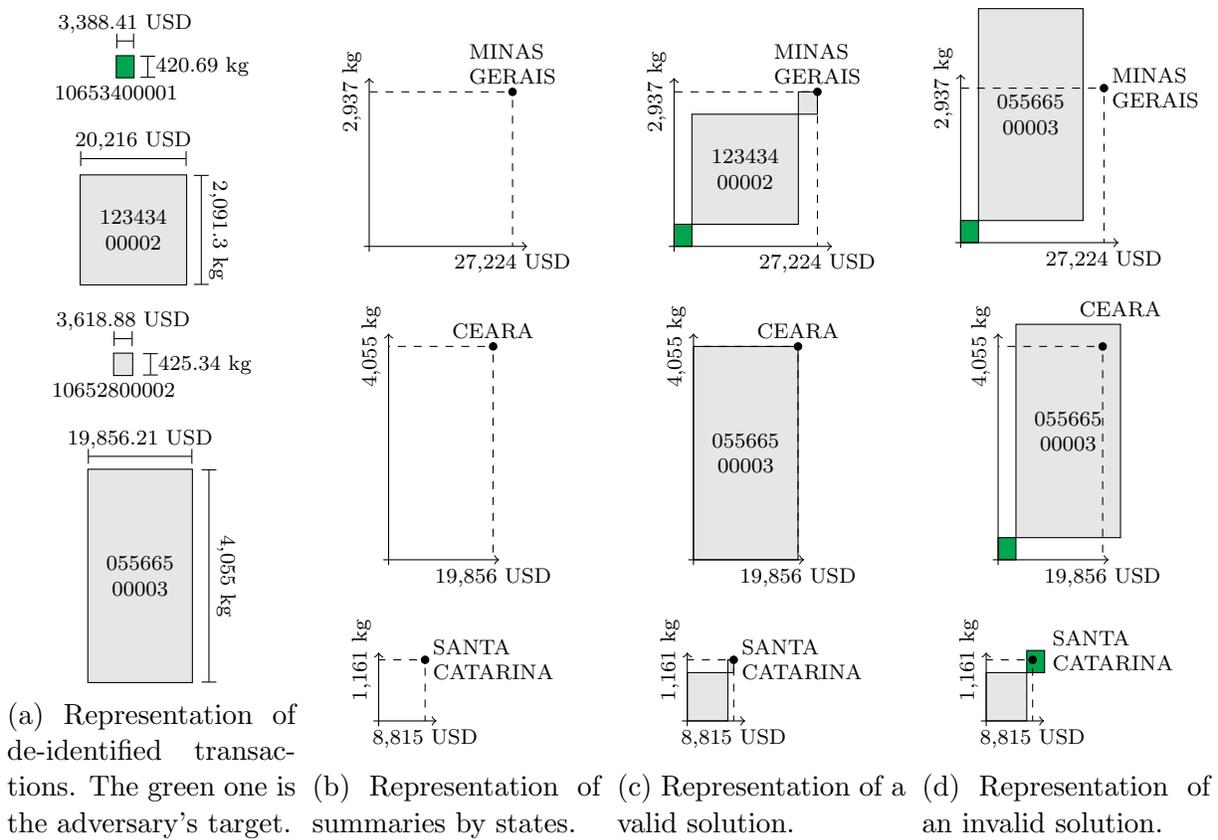

	From this setup, the adversary tries to fit all the ``packages'' to the planes without violating the boundaries represented by the dashed lines. 
	The packages must be stacked by their corners as shown in Figs.~\ref{fig:example-valid-solution} and \ref{fig:example-invalid-solution}. 
	This arrangement guarantees that in a valid solution the sum of packages' values will be equal to the total observed in the state represented by the corner point. 
	Figs.~\ref{fig:example-valid-solution} and \ref{fig:example-invalid-solution} depict valid and invalid solutions, respectively. 
	It is easy to see that in the valid one, the sums of the packages with \texttt{NO\_ORDEM} 12343400002, 10653400001 and 10652800002 match the total values of 
	the state of MINAS GERAIS, whereas the package with 
	\texttt{NO\_ORDEM} 05566500003 alone matches the totals of the state of CEARA.
	If, however, we swap the packages with \texttt{NO\_ORDEM} 12343400002 and
	05566500003, we no longer have a valid solution for either state.

	Moreover, it is possible to test if there is any other valid allocation in which the package with \texttt{NO\_ORDEM} 10653400001 is placed in a state other than MINAS GERAIS. 
	In fact, for this example there is not. 
	This is conclusive evidence that this package must have been purchased by a company based in MINAS GERAIS.
	
	With that, the adversary has successfully linked the target transaction to a single state, and she can move on to the next phase of the attack.
	
	\textbf{Attack Phase 3: Trying to link the target transaction to a city once the state is known.}
	Now that the adversary knows that the target transaction was made by a company based in MINAS GERAIS she can return to a similar problem to the one she faced in Phase 1. Before knowing the state there were too many possible solutions and the problem was intractable. But by incorporating restrictions found in Phase 3, she is able to reduce the number of possible allocations and (potentially) make the problem tractable. The algorithm used to solve this new, reduced instance is the same as the one explained for the case of the states. If this reduced instance of the problem is tractable, the adversary may link the target transaction to a specific city.
	In our example, the adversary is able to identify the city of OURO BRANCO, in the state of MINAS GERAIS, as the location of the target transaction.
	
	\textbf{After match: Inferring importer from city.}
	After the city in which the transactions have occurred has been identified, the adversary proceeds to infer which company based on that city may be the importer.
	In our example, a simple inspection of the \dataset{Importers} dataset
	reveals to the adversary that there is only one company in the city of OURO BRANCO, which leads to an immediate re-identification of the importer.
	(For ethical reasons, we omit the affected company's name.)
	
    Hence, in our example the adversary has a high confidence that she achieved a deterministic re-identification.
    This leads to the inference of the following sensitive
	information:
		(i) the importer purchased 420.69 kg of a specific type of cotton cloth; and 
		(ii)  that purchase cost  3,388.41 USD, so the average price is approximately 8.06 USD/ton.
	In our view, this is a clear violation of the importer's 
	fiscal secrecy within Brazilian legal framework.\footnote{Article 198 of the National Tax Law 5.172, October \nth{25} of 1966 and Article \nth{8} of Ordinance 7,017 of March \nth{11} of 2020.}
		
	In the following sections we formalize the general version of this
	attack, and explain how to compute its overall leakage of secret fiscal information.


\section{Attack model}\label{sec:model}

We now formalize our model, generalizing from the previous section's example, using principles from \emph{quantitative information flow} (QIF)~\cite{ScienceOfQIF},
which has sound information- and decision-theoretic grounds to model an adversary's knowledge, goals, and capabilities, as well as the
leakage of information caused by an attack. This formalism is helpful for 1) measuring the attack effectiveness and reach and 2) to provide a baseline that future works can use to compare different defensive strategies against this attack.

\subsection{Channels, secrets and adversaries}

Using QIF we model adversaries who employ \emph{Bayesian inference} on 
data release mechanisms modelled as noisy information-theoretic channels. The adversary updates her prior knowledge about sensitive values or ``secrets'' using Bayes' rule, thereby inducing a measure of ``channel leakage'', which incorporates the difference between her prior and posterior knowledge as well as her goals. We now make these ideas more precise.

A \emph{secret} models the information sought by the adversary; the set of possible secrets is denoted by $\calx$.
A \emph{prior} models the adversary's prior knowledge about the secrets (ie., before the attack is even performed), which can come from various sources, including other datasets, media,
or simply common-sense about foreign trade.
Prior knowledge is formalized as a probability distribution on possible 
secret values denoted $\pi \in \mathbb{D}{\calx}$.
(Here, $\mathbb{D}{\calx}$ is the set of probability distributions over the elements of set $\calx$.)

A \emph{channel} $C$ takes inputs (secrets) $x \in \calx$ and produces an observation $y \in \caly$ according to some distribution in $\mathbb{D}{\caly}$. When $\calx$ and $\caly$ are discrete we write $C$ as a matrix whose element $C_{x,y}$ is the probability of producing $y \in \caly$ when the input is $x \in \calx$. Rows in $C$ are distributions over $\caly$ and hence are 1-summing; if $C$ is deterministic then every entry in $C$ is $0$ or $1$, with each row containing exactly one value $1$.

Under the \emph{g-leakage} framework~\cite{ScienceOfQIF}, the adversary's goal is modelled with a \emph{gain function}, and her strategy is to maximize her expected gain using her knowledge of the channel. In this work we adopt the ``Bayes vulnerability'' gain function which simply models an adversary whose goal is to guess the secret in a single attempt. We describe the attack in terms of the secret's ``vulnerability'' with respect to the adversary in question.
Given the adversary's prior $\pi \in \mathbb{D}{\calx}$, the prior vulnerability of the secret is given by
\[
	\vbayes(\pi) ~=~ \max\limits_{x \in \calx} \pi(x) ~,
\]
and it measures how likely it is that the adversary can correctly guess the secret value in one try.
Given access to the channel $C$, the posterior vulnerability (i. e., after the attack is performed) can be computed as\footnote{ See ~\cite[Ch 5 Thm 5.15]{ScienceOfQIF} for details.}
\[
	\vbayes(\pi, C) ~=~ \sum\limits_{y \in \caly} \max\limits_{x \in \calx} \pi(x){\times}C_{x,y} ~,
\]
and it represents the expected probability of the adversary's 
correctly inferring the secret value in a single attempt after having observed 
the output of the channel.

Finally, we can compute the leakage of the channel (multiplicatively) by computing how much the channel increases the adversary's success, as
\begin{equation}\label{eq:bayesleak}
	\bayesleak(\pi,C) ~=~ \frac{\vbayes(\pi,C)}{\vbayes(\pi)}~.
\end{equation}

\subsection{Modelling the reconstruction}

In our scenario, the adversary's goal is to identify the company behind a transaction chosen as the target, using the information from the data release. We model the \emph{secrets} $\calx$ as the set of possible importers and the adversary's \emph{prior} as uniform on $\calx$. That is, we assume a fairly weak adversary who lacks any common knowledge on trade transactions.
We model the data release mechanism \emph{for the target transaction}~\footnote{We would model the entire data release as a channel if the adversary's goal was to learn information about \emph{any} transaction.} as a channel~\footnote{In fact what this encodes is the \emph{correlation}, which is a joint distribution, but this can be factorized into a channel by taking appropriate marginals.} which maps importers (secrets) to cities (observations). 

Before the attack, the channel is ``noisy'' from the adversary's perspective, as she lacks precise information about the correlation between the target transaction and the city of import. The  attack's goal is to reconstruct (more precisely) this correlation, and therefore to improve her likelihood of guessing the correct importer, by exploiting the information released in the \tabsumbycity and \tabsumbyncm datasets.

\subsection{Measuring the leakage of information caused by the attack}
We now return to the example of \Sec{sec:overview-example} to describe how the channel is constructed and to examine the overall leakage caused by the attack, which is quantified by comparing the 
	adversary's measure of success before and after the attack was 
	conducted, as per Eqn~(\ref{eq:bayesleak}).
	
	\Table{tab:chance_of_success_by_phase} presents the adversary's probability
	of correctly re-identifying the importer of the target transaction (10653400001) at each phase of the attack.
	Before the attack, this probability is uniform over all importers, and thus the prior vulnerability is simply $\nicefrac{1}{18,430}$, as our
	ill-informed adversary could only perform a blind guess over the all equally likely $18,430$ importers.
	At this point the \emph{channel} representing the data release has as its inputs all $18,430$ importers and as its outputs all possible cities.
	
	At the end of Phase 1, after having access to the \tabsumbycity,  the adversary knows that the target transaction could have gone to one of the 19 cities that imported from China in the SH4 5208.
	This reduces the number of possible importers from the initial 18,430 to 4,923. In our channel model, this has the effect of updating the adversary's \emph{prior} to become uniform over the remaining $4,923$ importers (eliminating the remainder options). The (posterior) vulnerability after this update is $\sum_{x} \pi(x) C_{x,y} = \nicefrac{1}{4,923}$. The information leakage caused by this phase of the attack is 
	$\bayesleak(\pi,C) = \nicefrac{(\nicefrac{1}{4,923})}{(\nicefrac{1}{18,430})} \approx 3.7$,
	which means that the adversary's probability of correctly re-identifying the target transaction's importer has increased by a factor of $3.7$ when compared to before the attack.
	
	At the end of Phase 2, the adversary knows that the transaction went to one of the 3 cities of the state of MINAS GERAIS that imported in the SH4 from China. The \tabimporters dataset shows that BELO HORIZONTE has 276 registered importers, EXTREMA has 49, and OURO BRANCO has 1. This reduces the number of possible importers from 4,923 to 326. As before, the posterior vulnerability increases to $\nicefrac{1}{326}$, and the information leakage caused by this phase is $\nicefrac{(\nicefrac{1}{326})}{(\nicefrac{1}{4,923})} \approx 15$. 
	Finally, at the end of Phase 3, the adversary knows that the transaction went to the city of OURO BRANCO, which has only one registered importer. Thus the posterior vulnerability is now $1$ (i. e. the adversary's probability of guessing the importer is $1$) and the leakage caused by the final phase is $326$.

	\begin{table}[!htbp]
	    \centering
		\begin{tabular}{
				 	l
				 	c
				}
				\textbf{Phase} 	& \textbf{Probability of correct re-identification}
				\\
				\hline
				Prior & $\nicefrac{1}{18,430} = 0.005\%$ 
				\\
                Phase 1  & $\nicefrac{1}{4,923} = 0.02\%$ 
                \\
                Phase 2  & $\nicefrac{1}{326} = 0.3\%$ 
                \\
                Phase 3  & $\nicefrac{1}{1} = 100\%$ 
                \\
		\end{tabular}
		\caption{Measure of the adversary's success by phase. Notice that by the end of Phase 1 the adversary's probability of correct re-identification
		increases by a factor $3.7$ over the prior.
		At the end of Phase 3, the increase factor is of $326$
		over the previous phase, and of $18,430$ over the prior.}
		\label{tab:chance_of_success_by_phase}
	\end{table}

\section{The attack algorithm}\label{sec:implementation}

In this section we introduce the mathematical modelling of the task of placing transactions (packages) into cities or states (bins) as an integer linear program (ILP). In this setting, we aim to introduce variables whose $1$ or $0$ value, after the execution of the algorithm, will indicate whether a package is or is not placed in a bin. The motivation to recourse to this framework is twofold.

First, a brute force attack that attempts to test all possible allocations and verify whether they satisfy validity constraints given by value and weight is intractable under current computational resources even for small examples. For instance, the example described in \Sec{sec:overview-example} yields more than six billion ways of distributing 14 transactions in 5 states.

In general, if one iterates over all possible allocations of $n$ packages in $m$ possible bins, this yields $m^n$ possible allocations. A slightly larger instance than the one considered in \Sec{sec:overview-example}, with $n = 40$ and $m = 12$, yields as many as roughly $10^{43}$ possibilities. Clearly, several of these possibilities do not need to be tested as they can be identified as sub-cases of already infeasible allocations. For instance, in Tbls.~\ref{tab:example-possible-states} and \ref{tab:example-all-transactions}, package 02898100002 can only fit in the state of ESPIRITO SANTO, thus all further tested attempts should be made with this allocation set, which eventually leads to the backtracking paradigm for testing exponentially many possibilities. 

Upon modelling the problem as an integer linear program, standard techniques to solve such programs, implemented in what is typically referred to as an \textit{ILP solver}, already employ the backtracking paradigm in conjunction with other geometric methods, yielding branch-and-bound or branch-and-cut strategies, for example. In principle, a suitable backtracking implementation should be marginally faster \textit{if an exact allocation exists}.
However, this may come at the cost of being much less flexible to modifications later observed in the data, and also of being harder to implement. 

The second reason why we argue the ILP formulation is more suited than brute force, or the more sophisticated backtracking methods, is that the data typically does not allow for an exact allocation. As errors can be introduced in the data either involuntarily or by design (suppressing decimals, for instance), one should not expect the values in \Table{tab:example-all-transactions} to sum up exactly to the ones seen in \Table{tab:example-possible-states}, for example. Therefore one will typically wish to find an allocation that minimizes error --- as it happens, the branch-and-bound method to solve ILPs is essentially backtracking applied when a target linear function needs to be minimized.

Finally, note that it could still be the case that a clever approach could find a package allocation more quickly, upon exploring a possible specific combinatorial structure of the problem. We believe that is highly unlikely, however, as we can show that the formal problem underlying our modelling is NP-complete (Thm.~\ref{thm:npcomplete} ahead).

\subsection{Mathematical modelling and complexity}\label{sec:implementation-modelling}

We now mathematically model the problem of allocating transactions (packages) into cities or states (bins). This shall be referred to as the \texttt{PACKAGE ALLOCATION} decision problem, whose instances are as follows. The mathematical formulation shown in \Fig{fig:ilp-formulation} will be helpful for any future replication or expansion of this work, and Theorem~1 is important to base our decision of using ILP and not some other custom tailored algorithm.

\paragraph*{Instance:}
	\begin{equation}
	    \mathfrak{I} = (P, k, w, B, c, \rho, \beta,  \varepsilon) \label{eq:instance}
	\end{equation}
	is the input tuple for the PACKAGE ALLOCATION problem, where:
	\begin{enumerate}
		\item $P = \{p_1, p_2, ..., p_n\}$ is a set of $n$ packages.
		\item $k$ is the number of attributes each package has. In principle, each attribute is a non-negative rational number. In the example shown in \Sec{sec:overview-example} there are two attributes: weight and value.
		\item $w : P \to \mathbb{Q}_+^k$ is a function that determines for each package its attribute set. The $i$th attribute of package $p$ will be denoted by $w(p;i)$.
		\item $B = \{b_1, b_2, ..., b_m\}$ is a set of $m$ bins.
		\item $c : B \to \mathbb{Q}_+^k$ is a function that determines for each bin its capacity for each attribute.
		\item $\rho \in P$ is a special, target package.
		\item $\beta \in B$ is a special, target bin for the target package.
		\item $\varepsilon : B \to \mathbb{Q}_{+}^k$ is a non negative tolerance for each bin and for each attribute.
	\end{enumerate}

An instance is a \texttt{YES}-instance if and only if there exists a function $X : P \times B \to \{0,1\}$ (naturally identified with a $|P| \times |B|$ matrix with $0/1$ entries) satisfying:
\begin{enumerate}
    \item[(a)] For each $p \in P$,
    $\sum_{b \in B} X_{pb} = 1$.
    (Each package is allocated to exactly one bin).
    \item[(b)] For each $i \in \{1,...,k\}$, and for each $b \in B$, 
    $\sum_{p \in P} X_{pb} w(p;i) \leq c(b;i) + \varepsilon(b;i),$
    and
    $\sum_{p \in P} X_{pb} w(p;i)  \geq c(b;i) - \varepsilon(b;i)$.
    (Bins' capacities are filled up to tolerance for all attributes).
    \item[(c)] $X_{\rho \beta} = 1$.
    (Package $\rho$ is allocated at bin $\beta$).
\end{enumerate}

It is immediate to verify that \texttt{PACKAGE ALLOCATION} is in NP, as any given function $X$ can be verified in time polynomial to the size of the instance. More interestingly, we have the following result.

\begin{restatable}[]{theorem}{thmnpcomplete}
\label{thm:npcomplete}
\texttt{PACKAGE ALLOCATION} is NP-complete.
\end{restatable}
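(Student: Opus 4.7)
The plan is to handle the two directions of NP-completeness separately. Membership in NP is immediate: given a candidate allocation function $X$ (a 0/1 matrix of size $|P| \times |B|$), I would use $X$ itself as the certificate. Checking condition (a) is just row-sum verification; condition (b) amounts to computing $k$ weighted column sums per bin and comparing against $c(b;i) \pm \varepsilon(b;i)$; condition (c) is a single entry lookup. All of this runs in time polynomial in $|P|$, $|B|$, and $k$, so $\texttt{PACKAGE ALLOCATION} \in \mathrm{NP}$.

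For NP-hardness I would reduce from the classical \textsc{Partition} problem: given positive integers $a_1, \ldots, a_n$ with $\sum_{i=1}^{n} a_i = 2S$, decide whether the index set splits into $A \sqcup B$ with $\sum_{i \in A} a_i = \sum_{i \in B} a_i = S$. From such an instance, I would build a \texttt{PACKAGE ALLOCATION} instance $\mathfrak{I}$ with a single attribute ($k = 1$), packages $p_1, \ldots, p_n$ having $w(p_i; 1) = a_i$, two bins $b_1, b_2$ both of capacity $c(b_j; 1) = S$, zero tolerance $\varepsilon \equiv 0$, and target $\rho = p_1$, $\beta = b_1$. The construction is clearly polynomial in the input size.

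Correctness would then proceed in two directions. If \textsc{Partition} admits a solution $(A, B)$, I would first, if necessary, swap the two parts so that $1 \in A$ (the bins are labelled, so this swap remains a valid partition because of the symmetry $\sum A = \sum B = S$); then setting $X_{p_i, b_1} = \mathbf{1}[i \in A]$ and $X_{p_i, b_2} = \mathbf{1}[i \in B]$ satisfies (a) trivially, (b) with equality since $\varepsilon \equiv 0$, and (c) since $1 \in A$ forces $X_{\rho, \beta} = 1$. Conversely, any valid $X$ for $\mathfrak{I}$ induces $A = \{i : X_{p_i, b_1} = 1\}$ and $B = \{i : X_{p_i, b_2} = 1\}$; condition (a) guarantees $A \sqcup B = \{1, \ldots, n\}$, and condition (b) with $\varepsilon = 0$ forces $\sum_{i \in A} a_i = \sum_{i \in B} a_i = S$, yielding a \textsc{Partition} solution.

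Since \textsc{Partition} is NP-complete and the reduction is polynomial, this would establish NP-hardness and, combined with membership, the claim. The only subtle point—which I would highlight as the sole obstacle worth mentioning—is the target-package constraint (c), because a generic reduction from \textsc{Partition} does not a priori pin down which side a specific element lies on. Here the difficulty dissolves because bins are distinguishable and \textsc{Partition} solutions come in symmetric pairs, so the forced placement of $p_1$ into $b_1$ is without loss of generality. No richer source problem (such as \textsc{3-Partition} or \textsc{Bin Packing}) is required, since the tolerance $\varepsilon$ and the multiplicity $k$ of attributes can both be taken trivial.
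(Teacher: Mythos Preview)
Your proposal is correct. The overall shape matches the paper's argument closely: both reductions use a single attribute, two bins whose capacities partition the total, and zero tolerance. The genuine difference lies in how constraint~(c) is handled. The paper reduces from \textsc{Subset Sum} and, because a subset summing to $S$ need not contain any particular element, it generates $\ell$ separate \texttt{PACKAGE ALLOCATION} instances (one per choice of target package $\rho$) and argues that the \textsc{Subset Sum} instance is \texttt{YES} iff at least one of them is. Strictly speaking this is a polynomial-time Turing (or disjunctive) reduction rather than a single many-one reduction. Your choice of \textsc{Partition} sidesteps that: because the two halves of a valid partition have equal sum, they can be swapped so that element~$1$ lands in the first bin, which lets you fix $\rho = p_1$, $\beta = b_1$ once and for all and obtain a clean Karp reduction from a single instance. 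So your argument is marginally tidier on this point, at the cost of restricting to the special case where the two bin capacities coincide; the paper's version is more general in spirit (unequal capacities) but pays for it with the multi-instance construction.
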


\begin{proof}
An instance of the \texttt{SUBSET SUM} problem is a set of natural numbers $N = \{n_1,...,n_\ell\}$ and a natural number $S$. It is an \texttt{YES}-instance if and only if a subset of $N$ sums to exactly $S$. This problem is well known to be NP-complete (see for instance \cite[Theorem 8.23]{kleinberg2006algorithm}). Given an instance of \texttt{SUBSET SUM} as above, generate $\ell$ instances of \texttt{PACKAGE ALLOCATION}, each of which contains $\ell$ packages (one for each number), only one attribute (the value of the natural number), two bins, one of capacity $S$ and one of capacity $\sum n_i - S$, $\varepsilon \equiv 0$, fixing $\beta$ to be the bin of capacity $S$ and making $\rho$ vary over all packages. It is immediate to verify that an instance of \texttt{SUBSET SUM} is \texttt{YES} if and only if there is at least one package $\rho$ for which the corresponding instance of \texttt{PACKAGE ALLOCATION} is \texttt{YES}. As this is a polynomial reduction, the result follows.
\end{proof}

Since a tailored efficient algorithm is likely nonexistent, we present
in \Fig{fig:ilp-formulation}
an exact integer linear program 
formulation. We use Instance \eqref{eq:instance} and allow $M$ to be a large enough number exceeding the sums of attributes.

\begin{figure}[!htbp]
\noindent\textbf{\texttt{PACKAGE ALLOCATION} ILP formulation:}
	\begin{align*}
		\text{minimize } 	& y_1 + ... + y_k. \\
		\text{subject to } 	& \\ 
		(a)\quad & \sum_{b \in B} X_{pb} = 1, \quad \text{for all }p \in P; \\
		(b)\quad & \sum_{p \in P} X_{pb} w(p;i) \leq c(b;i) + \varepsilon(b;i) + y_i M, \\ & \qquad \qquad \text{for all }b \in B, i \in [k]; \\
		(c)\quad & \sum_{p \in P} X_{pb} w(p;i)  \geq c(b;i) - \varepsilon(b;i) - y_i M, \\ & \qquad \qquad \text{for all }b \in B, i \in [k]; \\
		(d)\quad & X_{\rho\beta} = 1 \\
		(e)\quad & X_{pb} \in \{0,1\}, \quad \text{for all }p \in P, b \in B ;\\
		(f)\quad & y_i \in \{0,1\}, \quad \text{for all }i \in [k].
	\end{align*}
	\caption{Formulation of the \texttt{PACKAGE ALLOCATION} problem in ILP.}
	\label{fig:ilp-formulation}
\end{figure}

It is immediate to verify that the optimum solution of the program above is equal to $0$ if and only if the given instance is a \texttt{YES} instance. Moreover, the optimum solution, if not equal to $0$, indicates which bins and attributes had to be violated in a best possible allocation, possibly helping to indicate where inconsistencies 
in the data occur, as we discuss in \Sec{sec:implementation-rounding}. 

Finally, for the sake of intelligibility, below we discuss a modified model in which only one attribute is considered, rendering the formulation above significantly simpler. In this case, the objective function simply minimizes a real variable $z$, understood to model the maximum bin violation, thus restrictions of type (b) and (c) are replaced,  respectively, for
\[
    \sum_{p \in P} X_{pb} w(p){\leq} c(b){+}z
    \quad \text{and} \quad
    \sum_{p \in P} X_{pb} w(p){\geq}c(b){-}z.
\]
In this case, the instance is accepted if and only if the optimum $z$ is below a given threshold, henceforth introduced as $\nu$. One advantage of this model with only one attribute is that, upon removing the constraint for the special package and target bin, the solver will always return a feasible solution. This can be interpreted as a best possible allocation,
and therefore allows for a good measure of the quality of the data. For example, data which has suffered artificial modifications will typically result in best possible allocations with large errors.

The ILP presented above can be readily implemented in any solver, as we discuss in \Sec{sec:experiments}.

\subsection{Exact algorithm}\label{sec:implementation-exact}

In our attack model, we assume the adversary wants to know all the possible cities (or states) where the target transaction ($\rho$) might be placed.

There are two ways to achieve that. 
One of them requests the solver to print all feasible solutions whose objective value has achieved a given threshold. Not all solvers will handle this task particularly efficiently. 

A second direct approach is as follows. First, request one valid allocation without restricting the bin where the package $\rho$ can be placed, within the overflow tolerance $\nu$.\footnote{The adversary knows that the summary statistics are produced from the microdata, so they can be sure that at least one valid allocation exists.} 
After the first iteration, we retrieve the bin $b$ where the package $\rho$ was allocated and add as a restriction that $X_{\rho b} = 0$ for a next run of the \texttt{PACKAGE ALLOCATION} solver routine, thus finding possible new allocations for which package $\rho$ is placed in a different bin. The procedure then loops, adding restrictions until it becomes infeasible. The set $R \subseteq B$ containing the possible bins for the package $\rho$ is returned at the end.
For later reference, we will call this our \texttt{ATTACK ALGORITHM}.

Notice how $R$ contains all mathematically possible bins on which the package $\rho$ would fit. If $\rho$ is to be placed in a bin not present in $R$ that would necessarily create a partition of packages in which at least one of the bins capacity would be greatly overflown (beyond the tolerance). The algorithm's exactness is translated to an assured re-identification given the assumption that \tabimporttrans contains the same transactions used to produce the total shown at \tabsumbycity and
\tabsumbyncm (which is ensured by the Brazilian government with high degree of certainty \cite{NotaTecnicaSecex}).


\subsection{Dealing with imperfect data}\label{sec:implementation-rounding}

Although the foreign-trade statistics disclosed in the four datasets presented in Sec.~\ref{sec:overview-datasets} derive from the same source, some methodological differences produce divergences in the data.

\begin{enumerate}
	\item \textbf{Numeric precision}: The \tabimporttrans reports values with up to 2 decimal places for currency and 5 decimal places for weight. \tabsumbycity and \tabsumbyncm round all values to integers.
	\item \textbf{Data suppression}: The \tabimporttrans does not disclose those NCMs in which less than 4 importers operated, in an attempt to preserve privacy. It is, however, easy to find which NCMs were omitted by observing the differences (in NCMs) between \tabimporttrans and \tabsumbyncm.
	\item \textbf{Outliers treatment}: \tabsumbycity and \tabsumbyncm datasets are intended for statistical purposes and international comparability. Transactions that are caught in internal controls might be discarded from the official publications, and this process is opaque to external scrutiny.
	\item \textbf{Transactions inclusion criteria}: On April \nth{7}, 2021 the Ministry of Economy changed its foreign commerce statistics methodology and started including some imports classified as ADMINISTRATIVE or SPECIAL in their reports. The information used for the selection criteria is not present in the datasets~\cite{NotaTecnicaSecex} and we are unaware of publicly available datasets that have this information.
\end{enumerate}

Divergences caused by rounding values are easy to deal with, as our tolerance parameters can be adjusted accordingly. Divergences caused by the data suppression, treatment of outliers, and the different inclusion criteria, however, do have a detrimental effect on our attack algorithm, even though only the data suppression technique has confidentiality as a goal. 

Despite the fact that this kind of data manipulation may act as a kind of protection 
against database reconstruction attacks, 
it was not introduced with the explicit intent of increasing privacy, nor was it applied in a controlled manner.
Importantly, it lacks the formal guarantees of techniques based
on, e. g., differential privacy~\cite{DifferentialPrivacy}, and consequently, as we shall see,
did not meaningfully protect against our attack.

For instance, by grouping the \tabimporttrans, \tabsumbycity, and \tabsumbyncm datasets by SH4 and NCM codes, and comparing the total USD values we found the following for January/2021:

\begin{enumerate}
	\item Using the \tabsumbyncm dataset as the official source, the total imports amount to 15.2 billion USD spread across 7,062 unique NCMs. 
	\item There is no divergence between the \tabsumbyncm and \tabsumbycity datasets when they are grouped by SH4, country of origin, and state.
\end{enumerate}

As for differences between the \tabimporttrans and \tabsumbyncm datasets in the month of January/2021, we can highlight:

\begin{enumerate}
	\item \tabimporttrans suppressed the information of all transactions among 2,277 NCMs. This amounts to 1.5 billion USD (10\% of the total imported in the period of interest).
	\item For 1,672 NCMs which amount to 5.3 billion USD (35\% of the total imported)
	the differences for values are between 2 and -2 USD. This might be caused just by rounding or small transactions included or excluded by each publisher.
	\item For 284 NCMs which amount to 1.8 billion USD (12\% of the total imported)
	the totals obtained from \dataset{De-identified transactions} values is greater than or equal to 2 USD, when compared to \dataset{Summary by State}. Those differences might be related to the outliers treatment.
	\item For the remaining 1,211 NCMs which amount to 6.4 billion USD (42\% of the total imported)
	the totals obtained from \dataset{De-identified transactions} values is lower than or equal to 2 USD. Those differences are mainly related to the inclusion of some ADMINISTRATIVE transactions in the releases.
\end{enumerate}

The different methodologies used pose some challenges to the applicability of the \texttt{ATTACK ALGORITHM}. The main one is that the algorithm operates under the premise that each package must go to exactly one bin. However, as discussed above, there are some cases in which the Ministry of Economy deliberately discards some of the transactions from their summaries. In such cases, if we included those transactions in the algorithm we would end up with an allocation that does not reflect reality. To avoid this kind of problem, we need to 
identify the cases in which the Ministry of Economy has discarded some transactions and 
either: (i) use the \texttt{ATTACK ALGORITHM} to identify which transactions were discarded before trying to use the same algorithm to track the transaction importer; or (ii) simply not solve these cases because of the extra complexity. In this work, we only considered the second alternative.

Although this second option is simpler, we still need, in both cases, to correctly identify those cases where the Ministry of Economy might have discarded some transactions from the summaries. This task would be trivial if we could model our problem without considering any rounding errors. The challenge therefore lies in how to distinguish divergences caused by discarded transactions from those caused by the rounding methodology.

If the rounding mechanism is unknown,\footnote{We tried 1) rounding the transactions' values to the closest integer, and then summing, and 2) adding up all transactions and rounding the totals. In neither case were we able to replicate the rounding methodology used by the Ministry of Economy.} it is possible to model it as a random error with a uniform distribution. If we are interested only in the differences produced by rounding mechanisms, a function that rounds a number to an integer will produce similar results to adding a small random number to it.

For instance, assume $n$ transactions with rational values $v_i \in \mathbb{Q}$ were reported instead as their nearest integer. We can model those values as $v_i = k_i + \varepsilon_i$, where $k_i$ is an integer and $-0.5 < \varepsilon_i \leq 0.5$. Assuming the decimals in the rational numbers follow a uniform distribution, then so do the variables $\varepsilon_i$. We can therefore write $\sum^{n}{v_i} - \sum^{n}{k_i} = \sum^{n}{\varepsilon_i}$, and the maximum value that $\sum^{n}{\varepsilon_i}$ can assume is $n \times 0.5$. This could be used as a threshold 
to distinguish differences due to rounding from those due to transaction exclusion. 

Note, however, that this limit 
is already too loose. As $n$ grows, it becomes increasingly unlikely for $\sum^{n}{\varepsilon_i}$ to be equal to $n \times 0.5$. In fact by the Central Limit Theorem~\cite{EstatisticaBasica}, the probability that $\sum^{n}{\varepsilon_i} > n \times \nicefrac{2.33}{\sqrt{12n}}$ is lower than 1\%. That is the threshold 
we adopted for telling apart which differences in the figures reported were due to rounding and which were due to some exclusion criteria adopted by the Ministry of Economy.

For example, if we have 3 transactions whose original values sum up to 5.7, but the reported sum is 4.5, the actual difference is 1.2, and the threshold 
is 1.165. Hence, if our premises are right, there is less than a 1\% chance that this difference is due to just rounding, and the more probable hypothesis is that one of the transactions was not included in the total report. In such cases we did not try to re-identify the transactions.


\section{Experimental evaluation}\label{sec:experiments}

In this section we evaluate how our \texttt{ATTACK ALGORITHM} from
\Sec{sec:implementation-exact} performs in a large-scale 
re-identification attack on the \tabimporttrans 
dataset from January/2021.
This case study is conducted with the treatment of imperfect data
discussed in \Sec{sec:implementation-rounding}, and the
ILPs from \Sec{sec:implementation-modelling}  are used as sub-routines.

\subsection{Experimental set-up and computational cost}\label{sec:experiments-cost}
Our \texttt{ATTACK ALGORITHM} was implemented in Python 3.7
and uses the optimization package Pyomo 6.0.1 (for the solver routine), as well as the Pandas 1.3.1 package. 
The solver used was IBM$^\text{\textregistered}$ ILOG$^\text{\textregistered}$ CPLEX$^\text{\textregistered}$ version 12.9.0.0. 
The whole experimental evaluation was run in a personal computer Intel$^{\text{\textregistered}}$ Core$^{\text{TM}}$ i7-8550U CPU @ 1.80GHz with 32GB of ram memory.

We set 1 minute as the threshold for considering an instance
solved in reasonable time. 
Through all attack phases, 84,698 instances of the problem
were built, and of, those, 69,244 (82\%) were successfully 
solved within the time limit.
The remaining 15,454 (18\%) either failed to return a result within the time constraint or were ignored due to the errors produced by the rounding mechanism being over the threshold explained in \Sec{sec:implementation-rounding}. This took in total almost 62 hours of active processing.

\Table{tab:solving_stat} shows that the average solving time (for the instances that were solved) was 1 second. In fact, more than 99\% of the solved instances (around ${68,000}$) were solved in under 28 seconds.
Finally, \Fig{fig:complexity_vs_solving_time} shows how the average solving time (in seconds) increased with the instance complexity (defined here as $\log_{10}{m^{n}}$ where $m$ is the number of bins and $n$ the number of packages in the instance. This value gives a sense of the ILP search space, higher numbers represent bigger search spaces). 

\begin{table}[!htbp]
	\centering
	\begin{tabular}{lccc}
		\textbf{Status}&\textbf{Count}&\textbf{Avg. Complexity}&\textbf{Avg. Solving time (s)}
		\\ \hline
		Unsolved& 15,454 		&  136.4 & - 		\\
		Solved	& 69,244 		&  3.73 & 1.39 	\\
		Total & 84,698 & 28 & - \\
	\end{tabular}
	\caption{Statistics about solved and non-solved instances. The Avg. complexity is the simple mean of $log_{10}{m^n}$ across all instances.}
	\label{tab:solving_stat}
\end{table}

\begin{figure}[!htbp]
	\centering
	\includegraphics[width=0.7\textwidth]{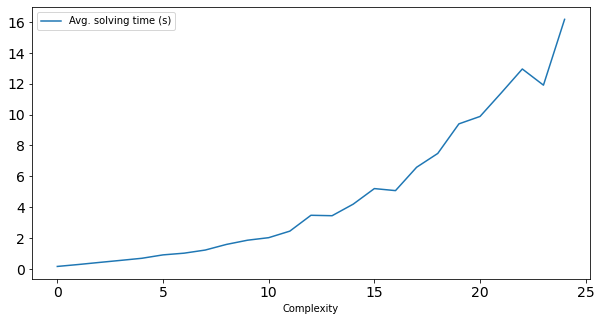}
	\caption{Average solving time by complexity ($\log_{10}{m}^{n}$), where $m$ is the number of bins and $n$ the number of packages.}
	\label{fig:complexity_vs_solving_time}
\end{figure}

\subsection{Results and analyses}\label{sec:experiments-results}
After the complete execution of our attack
(i. e., all 3 Phases described in \Sec{sec:overview-example}),
we were able to re-identify with certainty the 
importer's city for 138,413 transactions
out of a total of 817,468 in the dataset
(amounting to 6.1 billion dollars, or 40\% of the total value imported in the period of interest).
For 2,003 of such transactions (amounting to 137.3 million dollars, or 0.9\% of the total value imported), there is only one importer in the recovered city. Thus, we were able to learn with high confidence sensitive details of 348 companies (2\% of the total that imported something in the period of interest). Tbl.~\ref{tab:most-violated-cities} shows the top 5 cities (from a total of 348) in which the total value of transactions re-identified with absolute certainty was largest.

Tbl.~\ref{tab:final-results} summarizes
the adversary's success in each attack phase and step.
We provide the probability of re-identification
of a randomly selected transaction, as well as
the corresponding multiplicative leakage (i. e.,
the factor by which this probability increases wrt.\ the prior one).
In particular, after the conclusion of Phase 3, Step 2, the adversary's probability of re-identification is 0.9\%, corresponding to an increase of $168\times$ over the prior risk. Although this probability appears to be small, observe that it corresponds to a significant number of transactions with a combined value of 137 mi USD.
We also provide the number of transactions that
can be re-identified with certainty, and these
transaction values in USD.
In our interpretation, it is clear that these are violations of fiscal secrecy.

\begin{table}[!htbp]
\centering
\begin{subtable}[b]{\linewidth}
	\centering
	\begin{tabular}{lC{0.30\linewidth}C{0.2\linewidth}}
		\hiderowcolors
		\textbf{City}&\textbf{Transactions re-identified}&\textbf{Value (USD)}
		\\ \midrule
		\showrowcolors
		OURO BRANCO (MG)& 28 		&  24.6 mi 		\\
		IPAUSSU (SP)	& 3 		&  10 mi 	\\
		G.\ DO NORTE (MT)& 5 	&  9.9 mi	\\
		SAO GABRIEL (RS)& 66 &  7.3 mi	\\			
		COMODORO (MT)	& 3 &  4.1 mi 	\\
	\end{tabular}
	\caption{Cities with the largest values of transactions re-identified with certainty.}
	\label{tab:most-violated-cities}
\end{subtable}

\begin{subtable}[b]{\linewidth}
	\centering
	\begin{tabular}{C{0.16\linewidth}C{0.18\linewidth}C{0.14\linewidth}C{0.20\linewidth}C{0.12\linewidth}}
		\textbf{Attack Phase/Step} & \textbf{Prob.\ of re-identification} &  \textbf{Mult.\ Leakage} & \textbf{Transactions re-identified} & \textbf{Value (USD)} \\ \hline
		Prior & 0.05\% & -- & 0 & 0 \\ 
		1/1 & 0.10\% &  $16\times$ & 91 & 3 mi \\
		1/2 & 0.44\% & $80\times$  & 980 & 99 mi \\
		3/1 & 0.65\% & $120\times$ &  1,463 & 110 mi \\
		3/2 & 0.90\% & $168\times$ &  2,003 & 137 mi\\
	\end{tabular}
	\caption{Metrics of attack success by phase and step.}
	\label{tab:final-results}
\end{subtable}

    \caption{Experimental results for the concrete case study..}
    \label{fig:experimental-results}
\end{table}

\section{Related work}\label{sec:related-work}

As has been previously demonstrated in the Netflix prize \citep{narayanan2006break} and AOL\footnote{https://www.nytimes.com/2006/08/09/technology/09aol.html} data leaks, techniques such as the ones used by the Brazilian Government are weak against database reconstruction attacks. Even when de-identification, pseudonymization and generalization are used, the data release usually enables the construction of a set of constraints which are sufficiently restrictive as to agree with just one arrangement of the microdata.~\citep{10.1145/3287287}
	
An Australian report~\cite{HealthDataAustralia} shows how database reconstruction attacks can be used to re-identify individuals in the Australian health records public datasets. These Australian datasets contain billing information regarding procedures undertaken by individuals and paid by the government and their insurers. These billing records are de-identified and pseudonymized, nonetheless, the authors argue that around 900,000 individuals have unique sums regarding their paid expenditures and that the fact that they are unique can aid in the re-identification. Of course, those expenditure sums were calculated because the records were pseudonymized, \textit{i. e.} there was a field used for grouping the transactions. We think that the algorithm presented here shows that even if the artificially created ids are stripped from the public datasets the same sums could be reconstructed.
	
In the context of Brazil, a recent study has shown how 
the national educational census database, released by Brazilian authorities, can be used to reveal if a student has a physical disability with 99.69\% probability of success~\cite{GabrielNunes}.
The procedure we present here shows that even if the Brazilian authorities started to only release certain sensitive attributes in summaries, i. e., just disclose the total number of students with disabilities per school, re-identification would be possible, although the chance of success could be lower.


\section{Conclusion and prospects}\label{sec:conclusion}

In this paper we have presented a re-identification attack against a Brazilian dataset of foreign-trade statistics which successfully identified importers for transactions totalling 137 mi USD, thus violating those importers' rights to fiscal secrecy.
The main real-world practical consequence of our work was to inform the Brazilian Tax and Customs Administration (RFB) of the vulnerabilities we observed, in June 17th, 2021. Later in 2021, RFB removed access to the dataset named \tabimporttrans (Ordinance 100 of December \nth{16} of 2021).
We cannot state that this removal was a direct consequence of our work, but it is sufficient to avoid the attack presented here. Also, the periodicity of the \tabimporters dataset changed in February \nth{7}, 2022 and a technical note\footnote{\url{https://balanca.economia.gov.br/balanca/metodologia/Nota_publicacao_anual_da_lista_de_empresas_exportadoras_e_importadoras.pdf}} stated clearly that this was due to \qm{the possible vulnerabilities to the fiscal secrecy regarding linking this dataset with other open data}.

As mentioned before, since Brazilian foreign-trade statics follow international standards, it is possible that attacks similar to ours
may still be of concern to other countries.

One immediate goal created by our work is the proposal of techniques within the realm of differential privacy and its variants to allow for the publication of data maintaining statistical utility while enforcing privacy. A primary goal should be to introduce enough uncertainty so that all importers are able to plausibly deny being responsible for any given transaction, therefore the total amount of published data cannot contain enough information so as to violate the privacy of any importer with full certainty. In the case of foreign-trade statistics, ``plausible deniability'' is a sufficient guarantee to deter malicious actors from making reliable inferences; the requirements for utility are yet to be fully investigated. We leave this goal as future work.

One important aspect we overlooked is the fact that any real-world attacker will make use of economic common-sense --- a mining company is unlikely to import car parts, for example. Potential work in this area would be to incorporate machine learning methods for pre-processing data to provide a stronger (and perhaps more realistic) prior for the adversary against which new privacy defences could be evaluated.

In terms of the optimization modelling and attack algorithm, several research prospects unfold from this work. First, it would be interesting to implement database reconstruction attacks with tailored implementation of backtracking methods or dynamic programming, as these should be more scalable and less prone to numerical instability. The necessity to deal with errors might lead to imperatively complicated implementations. Second, the framework of integer linear programs more often than not leads to efficiently solvable approximation algorithms, which we intend to investigate both in the theoretical model of \texttt{Package Allocation} and also in the practical setting of our work. Third, and finally, the  introduction of noise to the data deforms the feasibility region of the program. The inverse optimization problem of determining which direction in the feasible set is least affected by the noise can lead to very interesting considerations. Assuming noise shall be eventually introduced to data with the intent of maximizing privacy protection and minimizing loss of utility, it might be possible to frame this task as a geometric modelling problem. We leave this investigation to future work.

\bibliographystyle{plain}
\bibliography{references}

\begin{thebibliography}{10}

\bibitem{ScienceOfQIF}
Mário Alvim, Konstantinos Chatzikokolakis, Annabelle McIver, Carroll Morgan,
  Catuscia Palamidessi, and Geoffrey~S. Smith.
\newblock {\em The Science of Quantitative Information Flow}.
\newblock Springer, 2020.

\bibitem{EstatisticaBasica}
Wilton Bussab and Pedro Morettin.
\newblock {\em Estatística Básica}.
\newblock Saraiva, 2002.

\bibitem{HealthDataAustralia}
Chris Culnane, Benjamin I.~P. Rubinstein, and Vanessa Teague.
\newblock Health data in an open world.
\newblock {\em CoRR}, abs/1712.05627, 2017.
\newblock Available at http://arxiv.org/abs/1712.05627.

\bibitem{DifferentialPrivacy}
Cynthia Dwork and Aaron Roth.
\newblock The algorithmic foundations of differential privacy.
\newblock {\em Found. Trends Theor. Comput. Sci.}, 9(3-4):211--407, August
  2014.

\bibitem{10.1145/3287287}
Simson Garfinkel, John~M. Abowd, and Christian Martindale.
\newblock Understanding database reconstruction attacks on public data.
\newblock {\em Commun. ACM}, 62(3):46--53, February 2019.

\bibitem{kleinberg2006algorithm}
Jon Kleinberg and Éva Tardos.
\newblock {\em Algorithm design}.
\newblock Pearson, 2006.

\bibitem{LofgrenKarl-Gustaf2002MwAI}
Karl-Gustaf Lofgren, Torsten Persson, and Jorgen~W Weibull.
\newblock Markets with asymmetric information: The contributions of {G}eorge
  {A}kerlof, {M}ichael {S}pence and {J}oseph {S}tiglitz.
\newblock {\em The Scandinavian journal of economics}, 104(2):195--211, 2002.

\bibitem{narayanan2006break}
Arvind Narayanan and Vitaly Shmatikov.
\newblock How to break anonymity of the netflix prize dataset.
\newblock {\em CoRR}, abs/cs/0610105, 2006.
\newblock Available at http://arxiv.org/abs/cs/0610105.

\bibitem{GabrielNunes}
Gabriel Nunes.
\newblock A formal quantitative study on privacy in the publication of
  educational censuses in brazil, 2021.
\newblock Thesis presented to the Graduate Program in Computer Science of the
  Federal University of Minas Gerais in partial fulfillment of the requirements
  for the degree of Master in Computer Science.

\bibitem{a_vi_gatt_1994}
World~Trade Organization.
\newblock Agreement on implementation of article {VI} of the {G}eneral
  {A}greement on {T}ariffs and {T}rade 1994, 1994.
\newblock Available at:
  https://www.wto.org/english/docs\_e/legal\_e/19-adp.doc.

\bibitem{NotaTecnicaSecex}
{SECEX}.
\newblock Revisão metodológica da contabilização dos fluxos de exportação
  e importação brasileira de bens, 2021.
\newblock Available at
  https://www.gov.br/economia/pt-br/centrais-de-conteudo/publicacoes/notas-informativas/2021/nota-informativa-revisao-da-metodologia-da-balanca-comercial.pdf/view.

\bibitem{un_imts_2010}
{United Nations}.
\newblock {\em International Merchandise Trade Statistics}.
\newblock United Nations, 2011.
\newblock Available at https://unstats.un.org/unsd/trade/imts/methodology.asp.

\bibitem{sna2009}
{United Nations}, {European Commission}, {International Monetary Fund},
  {Organisation for Economic Co-operation and Development}, and {World Bank}.
\newblock {\em System of national accounts 2008}.
\newblock United Nations, 2009.
\newblock Available at:
  https://unstats.un.org/unsd/nationalaccount/sna2008.asp.

\end{thebibliography}
\vspace{2cm}
\begin{center}
Author affiliations
\end{center}

\noindent    \textsc{Danilo Fabrino Favato}\\
    \textsc{Dept. of Computer Science} \\ 
    \textsc{Universidade Federal de Minas Gerais, Brazil} \\
    \textit{E-mail address}: \texttt{dfavato@dcc.ufmg.br} \\ \ \\
    \textsc{Gabriel Coutinho} \\
    \textsc{Dept. of Computer Science} \\ 
    \textsc{Universidade Federal de Minas Gerais, Brazil} \\
    \textit{E-mail address}: \texttt{gabriel@dcc.ufmg.br}\\ \ \\
    \textsc{M\'{a}rio S.\ Alvim} \\
    \textsc{Dept. of Computer Science} \\ 
    \textsc{Universidade Federal de Minas Gerais, Brazil} \\
    \textit{E-mail address}: \texttt{msalvim@dcc.ufmg.br}\\ \ \\
    \textsc{Natasha Fernandes} \\
    \textsc{Macquarie University, Australia} \\
    \textit{E-mail address}: \texttt{tashfernandes@gmail.com}

\newpage
\appendix
\section{Detailed description of the published datasets}\label{sec:sargasso-dataset-structure}

Tbls.~\ref{tab:ncm-uf-fields-complete}, 
\ref{tab:rfb-fields-complete},
\ref{tab:importers-fields-complete}, and
\ref{tab:hs4-city-fields-complete}
below provide the detailed structure of of the published
datasets used in our attacks, they were partially described in \Sec{sec:overview-datasets}.

\begin{table*}[!htbp]
	\centering
    \begin{tabular}{L{0.10\linewidth}L{0.50\linewidth}L{0.30\linewidth}}
		\textbf{Field} & \textbf{Description} & \textbf{Example} \\\hline
		\texttt{CO\_ANO}			& Year  when the import
		was registered.		& 2020				\\
		\texttt{CO\_MES}			& Month when the import
		was registered.		& 10				\\
		\texttt{CO\_NCM}			& Imported goods NCM
		code number.		& 40161090 (Oth.works of 
		vulcan. rubber 
		alveol. n/harden.)\\
		\texttt{CO\_PAIS}		& Imported goods country
		of origin.			& 249 (USA)			\\
		\texttt{SG\_UF\_NCM}	& Importer federation state. & MG  \\
		\texttt{CO\_VIA}			& Imported goods
		entrance route.		& 04 (AERIAL)		\\
		\texttt{CO\_URF}			& Unit of the \rfb
		responsible for
		administrative 
		treatment of the
		imported good.		& 0617700 
		(BELO HORIZONTE)	\\
		\texttt{CO\_UNID}		& Unit of measurement.	& 10 (kg)			\\
		\texttt{QT\_ESTAT}		& Imported goods total
		quantity in the 
		given unit 
		(CO\_UNID).			& 53				\\
		\texttt{KG\_LIQUIDO}		& Imported goods total net weight in kg.	& 53	 \\
		\texttt{VL\_FOB}			& Imported goods total value in USD. & 3,236	\\
		\texttt{VL\_FRETE}	& Total international freight in USD & 20 \\			
		\texttt{VL\_SEGURO}	& Total international insurance in USD & 2 \\		
	\end{tabular}
	\caption{Complete structure of the \tabsumbyncm dataset.}
	\label{tab:ncm-uf-fields-complete} 
\end{table*}

\begin{table*}[!htbp]
	\centering
\footnotesize{		
\begin{tabular}{L{0.15\linewidth}L{0.40\linewidth}L{0.40\linewidth}}
			\textbf{Field} & \textbf{Description} & \textbf{Example} \\\hline
			\texttt{NUMERO DE ORDEM} & Sequential unique number aiding the identification of the Declaration of Imports (DI). & 150320000100001 \\ 
            \texttt{ANOMES} & Year and month of transaction registry. & 202011 \\ 
            \texttt{COD.NCM} & DI header: NCM code number of imported good. & 42010090 (Saddlery and harness of other materials) \\ 
            \texttt{PAIS DE ORIGEM} & DI header: country of origin. & CHINA \\ 
            \texttt{PAIS DE AQUISICAO} & DI header: country of acquisition. & CHINA \\ 
            \texttt{UNIDADE DE MEDIDA} & DI header: measurement unit for statistical purposes. & NET KILOGRAM \\ 
            \texttt{UNIDADE COMERC.} & DI item: measurement unit for commercial purposes. & PIECE \\ 
            \texttt{DESCRICAO DO PRODUTO} & DI item: detailed description of imported good, as submitted by the importer (usually in Portuguese). & Bandana para animal de estimação, composição 100\% algodão, sem forro. MARCA: ACCESSORI - Artigo: 13556843 (P/N: 9.IMPOR.182.000003) (P/N Fab.: BANDANA ESTRELA)
            \\ 
            \texttt{QTDE ESTATISTICA} & DI header: statistical quantity. & 477.3 \\ 
            \texttt{PESO LIQUIDO} & DI header: net weight (in kg). & 477.3 \\ 
            \texttt{VMLE DOLAR} & DI header: FOB value in USD. & 8,313.44 \\ 
            \texttt{VL FRETE DOLAR} & DI header: international freight in USD. & 245.23 \\ 
            \texttt{VL SEGURO DOLAR} & DI header: insurance in USD. & 0.00 \\ 
            \texttt{VALOR UN. PROD.DOLAR} & Ratio \texttt{TOT.UN.PROD.DOLAR} / \texttt{QTD COMERCIAL} & 0.87 \\ 
            \texttt{QTD COMERCIAL} & DI item: commercialized quantity. & 4,773 \\ 
            \texttt{TOT.UN. PROD.DOLAR} & DI item: value in USD. & 4,152.51 \\ 
            \texttt{UNIDADE DESEMBARQUE} & DI header: port of entrance. & NOT/INFORMED \\ 
            \texttt{UNIDADE DESEMBARACO} & \rfb unit administratively treating this DI. & PORTO DE SANTOS \\ 
            \texttt{INCOTERM} & International commercial terms of the DI. & FOB \\ 
            \texttt{NAT. INFORMACAO} & DI header: nature of the operation. & EFFECTIVE \\	
        \end{tabular}
		\caption{Complete structure of the \dataset{De-identified transactions} dataset.}
		\label{tab:rfb-fields-complete} 
}	
\end{table*}

\begin{table*}[!htbp]
	    \centering
		\begin{tabular}{L{0.10\linewidth}L{0.30\linewidth}L{0.50\linewidth}}
			\textbf{Field} & \textbf{Description} & \textbf{Example} \\\hline
			\texttt{CNPJ}		& Business registration number
			in Brazil.					& 165.823.950/001-18	\\
			\texttt{EMPRESA}		& Business name.				& A. A. TELECHESKI		\\
			\texttt{ENDEREÇO}	& Business address.				& RUA INGLATERRA		\\
			\texttt{NÚMERO}		& Business address number.		& 160					\\
			\texttt{BAIRRO}		& Business address neighborhood & JARDIM EUROPA			\\
			\texttt{CEP}			& Business zip code.			& 87111-090				\\
			\texttt{MUNICÍPIO}	& Business address city.		& SARANDI				\\
			\texttt{UF}			& Business federation state.	& PR					\\
			\texttt{CNAE} 
			\texttt{PRIMÁRIA}	& Economic nature / industry
			segment of the business		& 4649 - 
			Wholesale commerce
			of equipment and
			personal articles
			for personal and
			domestic usage not
			previously specified.		\\
			\texttt{NATUREZA
			JURÍDICA}	& Juridic nature of the business& 213 - Individual 
			merchant firm	 \\
		\end{tabular}
		\caption{Complete structure of the \dataset{Importers} dataset.}
		\label{tab:importers-fields-complete}
	\end{table*}	

	\begin{table*}[!htbp]
	    \centering
		\begin{tabular}{L{0.10\linewidth}L{0.30\linewidth}L{0.50\linewidth}}
			\textbf{Field} & \textbf{Description} & \textbf{Example} \\\hline
			\texttt{CO\_ANO}			& Year  when the import
			was registered.		& 2021								\\
			\texttt{CO\_MES}			& Month when the import
			was registered.		& 01								\\
			\texttt{SH4}				& Imported goods SH4 
			code number.		& 3002 (Human blood; animal blood 
			prepared for therapeutic, 
			prophylactic or diagnostic uses; 
			antisera and other blood 
			fractions and modified 
			immunological products, 
			whether or not obtained by 
			means of biotechnological 
			processes; vaccines, toxins, 
			cultures of micr)				\\
			\texttt{CO\_PAIS}		& Imported goods country
			of origin.			& 249 (USA)							\\
			\texttt{SG\_UF\_MUN}		& Importer federation
			state.				& SP								\\
			\texttt{CO\_MUN}			& Importer city.		& 3449904 (SAO JOSE DOS CAMPOS)			\\
			\texttt{KG\_LIQUIDO}		& Imported goods total
			net weight in kg.	& 326								\\
			\texttt{VL\_FOB}			& Imported goods total
			value in USD.		& 906,723	\\			
		\end{tabular}
		\caption{Complete structure of the \dataset{Summary by city} dataset.}
		\label{tab:hs4-city-fields-complete} 
			
	\end{table*}
	

\end{document}